\pgfplotsset{compat=newest}
\newcommand{\CorrB}[1]{{\color{black} #1}}
\newtheorem{prop}{Proposition}{\bf}{}
{\bf}{}
\newtheorem{cor}{Corollary}{\bf}{}
{\bf}{}
\newtheorem{defn}{Definition}{\bf}{}
\newtheorem{obs}{Observation}{\bf}{}
\newtheorem{form}{Formulation}
\def\tsc#1{\csdef{#1}{\textsc{\lowercase{#1}}\xspace}}
\definecolor{col1}{RGB}{66,49,34}
\definecolor{col2}{RGB}{37,88,105}
\definecolor{col3}{RGB}{220,140,51}
\definecolor{col4}{RGB}{138,138,51}
\definecolor{col5}{RGB}{138,138,51}
\begin{document}
\LinesNumbered
\let\WriteBookmarks\relax
\def\floatpagepagefraction{1}
\def\textpagefraction{.001}
\shorttitle{A 2-approximation for the SPS}
\shortauthors{M. Frohn~\emph{et al.}}

\title [mode = title]{A 2-approximation algorithm for the softwired parsimony problem on binary, tree-child phylogenetic networks}

\author[1]{Martin Frohn}[orcid=0000-0002-5002-4049]
\fnmark[1]
\cormark[1]

\author[1]{Steven Kelk}[orcid=0000-0002-9518-4724]
\fnmark[2]

\address[1]{Department of Advanced Computing Sciences, Maastricht University, Paul Henri Spaaklaan 1, 6229 EN Maastricht, Netherlands.}

\cortext[cor1]{Corresponding author}
\fntext[fn1]{Email: \href{mailto:martin.frohn@maastrichtuniversity.nl}{martin.frohn@maastrichtuniversity.nl} (Martin Frohn)}
\fntext[fn2]{Email: \href{mailto:steven.kelk@maastrichtuniversity.nl}{steven.kelk@maastrichtuniversity.nl} (Steven Kelk)}

\begin{abstract}
Finding the most parsimonious tree inside a phylogenetic network with respect to a given character is an NP-hard combinatorial optimization problem that for many network topologies is essentially inapproximable. In contrast, if the network is a rooted tree, then Fitch's well-known 
algorithm calculates an optimal parsimony score for that character in polynomial time. 
Drawing inspiration from this we here introduce a new extension of Fitch's algorithm which runs in polynomial time and ensures an approximation factor of 2 on binary, tree-child phylogenetic networks, a popular topologically-restricted subclass of phylogenetic networks in the literature. Specifically, we show that Fitch's algorithm can be seen as a primal-dual algorithm, how it can be extended to binary, tree-child networks and that the approximation guarantee of this extension is tight. These results for a classic problem in phylogenetics strengthens the
link between polyhedral methods and phylogenetics and can aid in the study of other related optimization problems on phylogenetic networks.
\end{abstract}

\begin{keywords}
	Combinatorial optimization; integer programming; approximation algorithms; phylogenetics;
\end{keywords}

\maketitle

\section{Introduction}\label{sec:1}
Given a set  $\Gamma =\{1,2,\dots,n\}$ of $n\geq 3$ distinct species, or more abstractly
\emph{taxa}, a \emph{phylogenetic tree} of $\Gamma$ is an ordered triplet $(T,\phi,w)$ such that $T$ is a tree having $n$ leaves, $\phi$ is a bijection between the leaves of $T$ and the taxa in $\Gamma$, and $w$ is a vector of non-negative weights associated to the edges of $T$. The high-level idea is that $T$ represents the pattern of diversification events throughout evolutionary history that give rise to the set of taxa $\Gamma$, and $w$ represents the quantity of evolutionary change along a given edge. A \emph{phylogenetic network} of $\Gamma$ is an extension of the definition of a phylogenetic tree $(T,\phi,w)$ to an ordered triplet $(N,\phi,w)$ where $N$ is a connected \CorrB{directed acyclic rooted} graph with $n$ vertices of in-degree 1 and out-degree 0. Hereinafter, given the existence of the bijection $\phi$, with a little abuse of notation, we use the terms phylogenetic tree $T$ and network $N$ (and a same symbol) to indicate both a phylogenetic tree and network and the associated graph, respectively. An example for a phylogenetic tree and network is shown in Figure~\ref{intro1}. For $x,y\in\mathbb{N}$, denote the \emph{Hamming distance} $d_H$ of $x$ and $y$ by $d_H(x,y)=1$ if $x\neq y$ and $d_H(x,y)=0$ otherwise. Given a phylogenetic network $N$ of $\Gamma$ and, for a positive integer $p$, a function $C:\Gamma\to\{0,1,\dots,p\}$, called a \emph{character} of $\Gamma$ (over the set of \emph{states} $\{0,1,\dots,p\}$), the \emph{Softwired Parsimony Score Problem} (SPS) consists of finding a phylogenetic tree $T=(V,E)$ of $\Gamma$ isomorphic up to edge subdivisons to a subtree of $N$ \CorrB{such that the isomorphism is the identity map on $\Gamma$} and \CorrB{a map $C':V\to\{0,1,\dots,p\}$ with $C'(x)=C(x)$ for all $x\in\Gamma$, called an \emph{extension}} of $C$ to $V$, that minimizes
\begin{align*}
\CorrB{\text{score}(T,C')=\sum_{\{u,v\}\in E}d_H(C'(u),C'(v)).}
\end{align*}
It can be shown that the phylogenetic tree shown \CorrB{on} the right in Figure~\ref{intro1}, and the indicated extension, is an optimal solution to the SPS instance formed by the taxa $\Gamma$, phylogenetic network $N$ and character $C$ in the same figure.
 
\begin{figure}[pos=!t,align=\centering]
\centering
\includegraphics[scale=0.35]{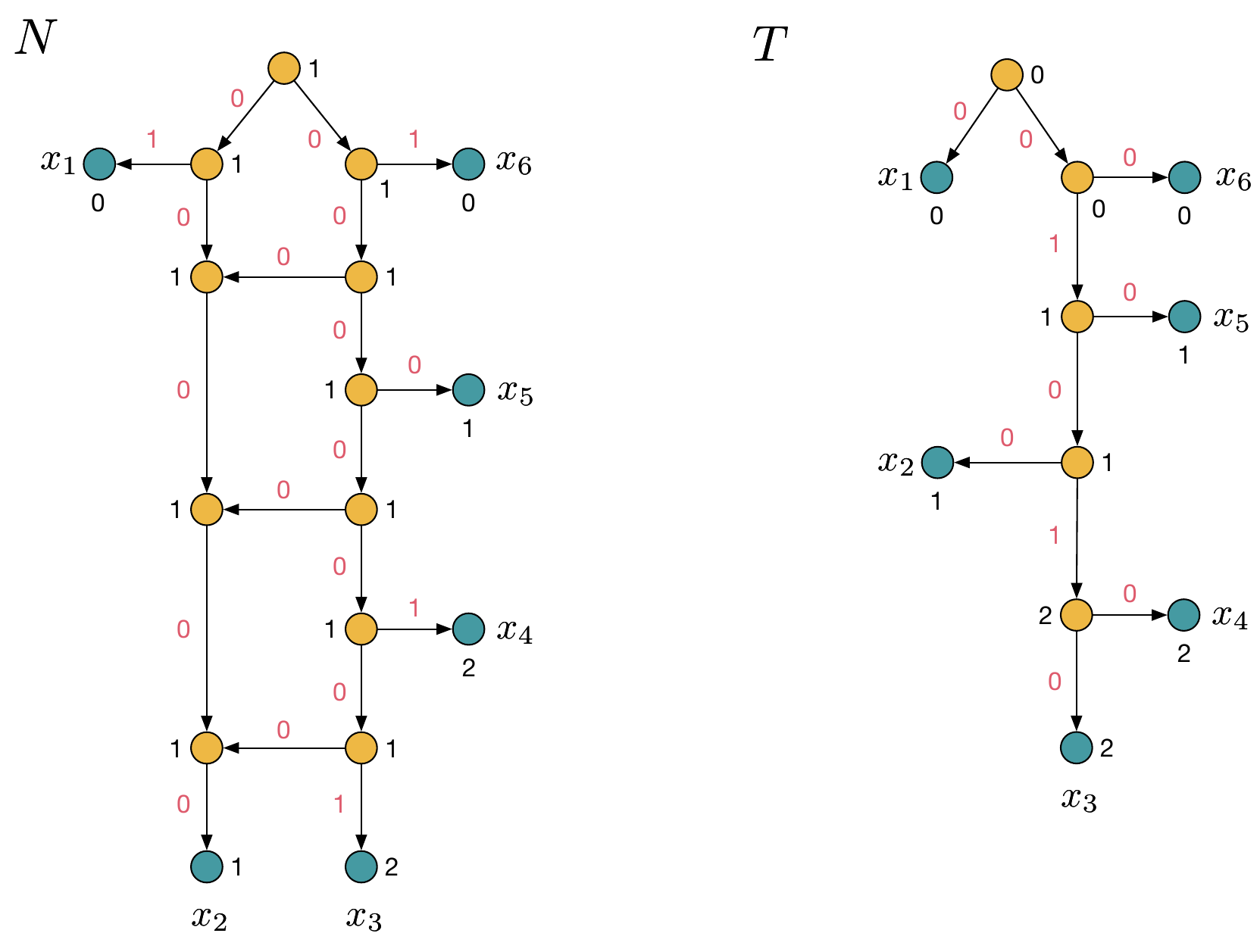}
\caption{Consider the set of taxa \CorrB{$\Gamma =\{x_1,x_2,\dots,x_6\}$} and the character $C:\Gamma\to\{0,1,2\}$ where $C(x_1)=C(x_6)=0$, $C(x_2)=C(x_5)=1$ and $C(x_3)=C(x_4)=2$. Then, the directed graph on the left (right) shows a phylogenetic network $N=(V,E)$ (tree \CorrB{$T=(W,F)$}) of $\Gamma$ for an extension $C'$ of $C$ to $V$ \CorrB{($W$) in black and distances $d_H(C'(u),C'(v))$ for all $(u,v)\in E$ ($(u,v)\in F$) in red. Observe that score$(T,C')=2$}. Hence, the SPS for $N$ is at at most 2 - in fact, \CorrB{we need at least two state changes because we have three different states}, so the SPS of $N$ is exactly 2. In contrast, the HPS for $N$ is 4, and this can be achieved by the extension $C'$ shown on $N$.}\label{intro1}
\end{figure}

The SPS has its origins in the estimation of phylogenetic trees under the maximum parsimony criterion~\citep{farris70,fitch71}. There the input is a sequence of DNA symbols for each taxon in $\Gamma$, which can be thought of as \CorrB{an (ordered) set} of characters, and the goal is to find the tree $T$ that minimizes the sum of scores, as defined above, \CorrB{with respect to all possible extensions and} ranging over all the characters in the set. An optimal phylogenetic tree estimated under this criterion might prove useful to explain hierarchical evolutionary relationships among taxa~\citep{steel03}. In practice this NP-hard optimization problem is tackled by heuristically searching through the space of trees, and for each such tree calculating its score. The calculation of the score of a given tree is often called the ``small parsimony'' problem. The small parsimony problem can easily be solved in polynomial time using Fitch's algorithm (i.e., Fitch's algorithm solves SPS on trees). However, phylogenetic trees cannot take into account reticulation events such as hybridizations or horizontal gene transfers, which play a significant role in the evolution of certain taxa~\citep{koonin01,bogart03,arnold97}. Therefore, one can extend the parsimony paradigm to networks displaying reticulation events, i.e., graphs which model the evolutionary process and allow for the existence of cycles. As for the inference of trees, the parsimony problem is then tackled in practice by searching through the space of networks, scoring each candidate network as it is found i.e. solving the small parsimony problem on the network.
 
 The objective function we consider in this article, SPS, is one of several different ways of scoring a phylogenetic network. Essentially, the SPS arises when we restrict our search space to one fixed phylogenetic network $N$ and we seek out the best phylogenetic tree $T$ `inside'~$N$. A class of phylogenetic networks which we will focus on in this article, and which have recently been intensively studied in the phylogenetic networks literature, are \emph{tree-child} networks~\citep{kong1}. This subclass is topologically restricted in a specific way which often makes NP-hard optimization problems on phylogenetic networks (comparatively) easier to solve, see e.g \cite{van2010locating,van2022practical}.
Unfortunately,  SPS remains challenging even when we restrict our attention to rooted, tree-child networks with an additional algorithmically advantageous topological restriction known as \emph{time consistency}: for any $\epsilon>0$ the \CorrB{best known} inapproximability factor is $|\Gamma|^{1-\epsilon}$. If we focus only on rooted, \emph{binary} phylogenetic networks that are not necessarily tree-child the \CorrB{best known} inapproximability
factor is $|\Gamma|^{\frac{1}{3}-\epsilon}$~\citep{fischer15}. Between these two variations is the situation when the class of networks we consider are rooted, binary and tree-child. 
\CorrB{\begin{defn}
$\mathcal{N}$ denotes the class of rooted, binary, tree-child phylogenetic networks of $\Gamma$.
\end{defn}}
The problem remains NP-hard \CorrB{on $\mathcal{N}$}, but how approximable is it? In this article we develop a polynomial-time 2-approximation for the SPS problem on this class; prior to this result no non-trivial approximation algorithms were known. In Section~\ref{sec2} we give some background on the fundamental properties of $\mathcal{N}$. In Section~\ref{sec3} we emulate Fitch's algorithm by a primal-dual algorithm. In Section~\ref{sec4} we extend the results from Section~\ref{sec3} to obtain a 2-approximation algorithm for the SPS restricted to networks from class $\mathcal{N}$ and any character of $\Gamma$. Furthermore, we show that the approximation guarantee of our algorithm is tight. In Section~\ref{sec:5} we reflect on the broader significance of our results.

Finally, we note for the sake of clarity that there is also a model in the literature where the score of a network on a given character  is defined by aggregating Hamming distances over \emph{all} the edges in the network, rather than just those belonging to a certain single tree inside the network. This is called the \emph{Hardwired Parsimony Score} (HPS) problem and it behaves rather differently to the SPS model that we consider. The HPS can be solved in polynomial time for characters $C:\Gamma\to\{0,1\}$, called \emph{binary} characters, and it is essentially a multiterminal cut problem for non-binary characters, yielding APX-hardness and constant-factor approximations \cite{fischer15}. HPS is also a very close relative of the ``happy vertices/edges'' problem that has recently received much attention in the algorithms literature \cite{zhang2018improved}.

\section{Notation and background}\label{sec2}
For a graph $G$ we denote $V(G)$ and $E(G)$ as the \CorrB{vertex set and edge set} of $G$, respectively. We call a connected graph $G$ \emph{rooted} if $G$ is directed and has a unique vertex $\rho\in V(G)$ having in-degree zero and there exists a directed path from $\rho$ to any vertex of in-degree 1 and out-degree 0 of $G$. We call $\rho$ the \emph{root} of $G$ and denote $d_{\text{in}}(v)$ and $d_{\text{out}}(v)$ as the in-degree and out-degree of the vertex $v\in V(G)$, respectively. In this article we only consider rooted directed acyclic graphs. We call a directed $G$ \emph{binary} if $d_{\text{in}}(v)+d_{\text{out}}(v)\leq 3$ for all $v\in V(G)$. For a rooted, binary graph $G$, we define
\begin{align*}
V_{\text{int}}^r&=\left\{v\in V(G)\,:\,d_{\text{in}}(v)=2, d_{\text{out}}(v)=1\right\},\\
V_{\text{int}}^t&=\left\{v\in V(G)\,:\,d_{\text{in}}(v)=1, d_{\text{out}}(v)=2\right\},\\
V_{\text{ext}}&=\left\{v\in V(G)\,:\,d_{\text{in}}(v)=1, d_{\text{out}}(v)=0\right\}.
\end{align*}
We call $V_{\text{int}}^r$, $V_{\text{int}}^t$ and $V_{\text{ext}}$ the \emph{reticulation vertices}, \emph{internal tree vertices} and \emph{leaves} of~$G$, respectively, and omit the mention of $G$ from the definition of these symbols to simplify our notation. When $G$ is a phylogenetic network, as introduced in the last section, then we assume there exist no vertices $v\in V(G)$ with $d_{\text{in}}(v)=d_{\text{out}}=1$. Hence, in this case we have $V(G)=V_{\text{int}}^r\cup V_{\text{int}}^t\cup V_{\text{ext}}$. Analogously, define
\begin{align*}
E_{\text{int}}^r=\left\{(u,v)\in E(G)\,:\,v\in V_{\text{int}}^r\right\},~E_{\text{int}}^t=\left\{(u,v)\in E(G)\,:\,v\in V_{\text{int}}^t\right\},~E_{\text{ext}}=\left\{(u,v)\in E(G)\,:\,v\in \CorrB{V_{\text{ext}}}\right\},
\end{align*}
i.e., $E(G)=E_{\text{int}}^r\cup E_{\text{int}}^t\cup E_{\text{ext}}$. We call $E_{\text{int}}^r$, $E_{\text{int}}^t$ and $E_{\text{ext}}$ the \emph{reticulation edges}, \emph{internal tree edges} and \emph{external tree edges} of~$G$, respectively. The graph associated with sets $V_{\text{int}}^r$, $V_{\text{int}}^t$, $V_{\text{ext}}$, $E_{\text{int}}^r$, $E_{\text{int}}^t$, $E_{\text{ext}}$ will be clear from the context.

We call a rooted phylogenetic network \emph{tree-child} if
\begin{align*}
\forall\,u\in V_{\text{int}}^r\cup V_{\text{int}}^t~\exists\,(u,v)\in E_{\text{int}}^t\cup E_{\text{ext}}.
\end{align*}
The network $N$ in Figure~\ref{intro1} is not tree-child. This can be certified by observing that there is a non-leaf vertex whose only child is a reticulation vertex. The network in Figure~\ref{intro2} is, however, tree-child. We say a phylogenetic tree $T$ is \emph{displayed} by a phylogenetic network $N$ if $T$ is a subtree of $N$ up to edge subdivisions. \CorrB{For example, the tree $T$ in Figure~\ref{intro1} is displayed by network $N$ in the same figure.} Hence, the SPS seeks to find a phylogenetic tree $T$ displayed by the given phylogenetic network $N$,  and an extension $C'$, such that score$(T,C')$ is minimum. Making a phylogenetic network $N$ acyclic by deleting exactly one reticulation edge $(u,v)$ for all reticulation vertices $v$ is called \emph{switching}. Let $\mathcal{T}(N)$ denote the set of all phylogenetic trees displayed by the phylogenetic network $N$ and $\mathcal{S}(N)$ the set of switchings of the network. An attractive property of tree-child networks is that a phylogenetic tree $T$ is in $\mathcal{T}(N)$ if and only if
there is a switching $S\in\mathcal{S}(N)$  that is isomorphic to $T$ up to edge subdivision. This property holds for tree-child networks $N$ because in every $S\in\mathcal{S}(N)$ and for all $v\in V(S)$ there exists a path in $S$ from $v$ to a leaf of $N$. Hence, in this case it is sufficient to determine a switching of $N$ to solve the SPS. In more general network classes switchings can also be used to characterize displayed trees, but there isomorphism up to edge subdivision does not necessarily hold: the switching might contain leaf vertices unlabelled by taxa, for example.

\begin{figure*}[pos=!t,align=\centering]
\centering
\includegraphics[scale=0.5]{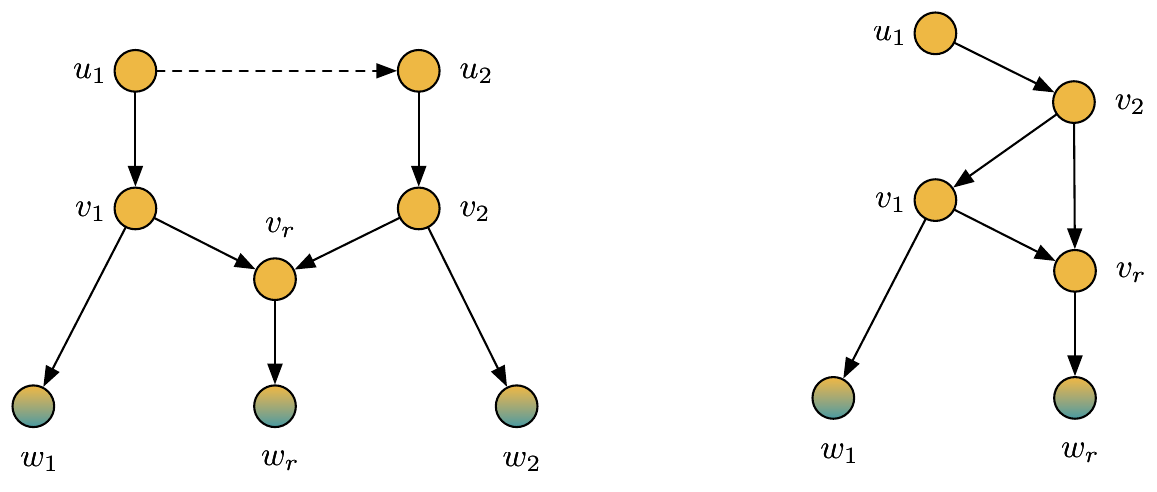}
\caption{Two possible subgraphs \CorrB{depicting parents and children of a reticulation vertex $v_r$ and their incidence relations} in a rooted, binary, tree-child network $N$. Vertices $w_1,w_2$ and $w_r$ can be internal vertices or leaves. Note that in the left subgraph $u_1 = u_2$ is possible.}\label{ret1}
\end{figure*}

\begin{figure*}[pos=!b,align=\centering]
\centering
\includegraphics[scale=0.5]{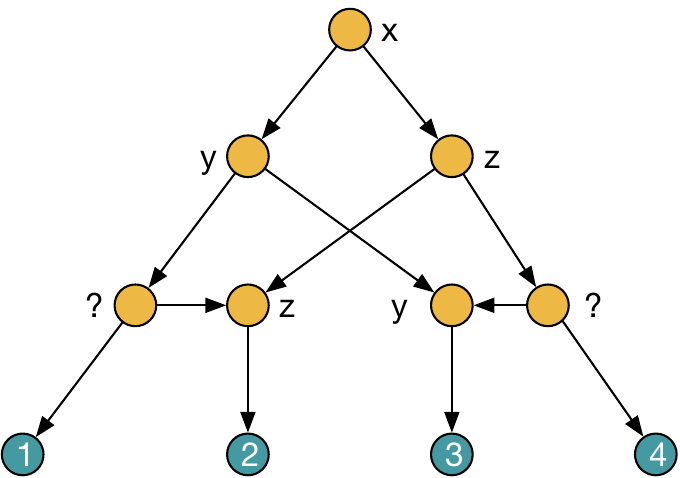}
\caption{Consider the set of taxa $\Gamma =\{1,2,3,4\}$. The graph shows a rooted, binary, phylogenetic network $N$ of $\Gamma$ in which the internal vertices are labeled by a function $t:V_{\text{int}}^t\cup V_{\text{int}}^r\to\{x,y,z,?\}$~\citep{baroni06}. The network is not time-consistent, but it is tree-child.} \label{intro2}
\end{figure*}

Moreover, for networks from the set $\mathcal{N}$ of rooted, binary, tree-child phylogenetic networks, the \CorrB{local structure of reticulation vertices} is shown in Figure~\ref{ret1} (when allowing $u_1=u_2$). We call a rooted phylogenetic network $N$ \emph{triangle-free} if every cycle in $N$ has length at least four. 
\CorrB{\begin{obs}\label{obs::triangle}
Let $N\in\mathcal{N}$ contain at least one triangle and let $N'$ be the phylogenetic network of $\Gamma$ obtained from $N$ by removing edges in triangles of $N$ and deleting resulting edge subdivisons. Then, for a character $C$ of $\Gamma$, the optimal objective function values of the SPS for $N$ and $C$, and the SPS for $N'$ and $C$ are equal.
\end{obs}}
To see Observation~\ref{obs::triangle}, consider the graphs in Figure~\ref{ret1}. We argue that the triangle on the right does not yield any benefit to the softwired parsimony score. On the one hand, we can remove edge $(v_2,v_r)\in E_{\text{int}}^r$. Then the resulting network is isomorphic up to edge subdivisions to a network $N_1$ in which $(u_1,v_1),(v_1,w_1)$ and $(v_1,w_r)$ form edges. On the other hand, we can remove reticulation edge $(v_1,r)$ in $N$. Then the resulting network is isomorphic up to edge subdivisions to a network $N_2$ in which $(u_1,v_2),(v_2,w_1)$ and $(v_2,w_r)$ form edges. By construction networks $N_1$ and $N_2$ have the same softwired parsimony score. Hence, the decision on whether to use edge $(v_1,v_r)$ or $(v_2,v_r)$ to construct a phylogenetic tree displayed by $N$ has no effect on the optimal solution to the SPS. We can apply this procedure to remove all triangles and resulting edge subdivisions. Thus, throughout this article we assume that networks $N\in\mathcal{N}$ are triangle-free.

We call a rooted phylogenetic network \emph{time-consistent} if there exists a function $t:V(N)\to \mathbb{N}$, called a \emph{time-stamp function}, such that
\begin{align*}
t(v)-t(u)=\begin{cases}0&\text{if $(u,v)\in E_{\text{int}}^r$},\\
\geq 1 &\text{if $(u,v)\in E_{\text{int}}^t\cup E_{\text{ext}}$}.
\end{cases}
\end{align*}
An example of a network $N$ which is not time-consistent is shown in Figure~\ref{intro2}. To see this, first observe that labeling the root by any natural number $x\in\mathbb{N}$ forces its children to be labeled by $y>x$ and $z>x$, respectively. This means, the fathers of taxa $2$ and $3$ have to be labeled as $z$ and $y$, respectively, because they are reticulation vertices. To label the remaining internal vertices we require $y<z$ and $z<y$, which is impossible. Thus, $N$ is not time-consistent. For a positive integer $p$, let $S_p=\{0,1,\dots,p\}$ be the set of \emph{character states} and let the power set $2^{S_p}$ of $S_p$ denote the set of \emph{character state sets}. With a little abuse of notation we write singleton character state sets $\{q\}\in 2^{S_p}$ interchangeably as the character state $q$. This allows us to define characters of $\Gamma$ by taking unions and intersections of character state sets. \CorrB{To this end, we call vertices with shared parent vertices \emph{siblings}.} Let $A$ be an algorithm for a phylogenetic network which iteratively propagates character states from the leaves to all reticulation vertices $v_r$ only if \CorrB{the siblings of $v_r$} have been assigned a character state $S\subseteq S_p$ by~$A$. Note that $A$ does not necessarily exist. If it does, we say that $A$ \CorrB{respects a \emph{reticulation consistent ordering}}.

\begin{algorithm}[!b]
 \KwIn{A set of taxa $\Gamma=\{1,2,\dots,n\}$, a rooted, binary phylogenetic tree $T$ of $\Gamma$; a character~$C$ of $\Gamma$; the root $\rho\in V(T)$}
 \KwOut{A character $C'$ of $V(T)$; minimum score$(T,C')$}
 \eIf{$\rho$ is a leaf}{
 	\Return{$C,0$}\;
 }{
 	$v_1,v_2\leftarrow$ children of $\rho$\;
 	\For{$i=1,2$}{
		$T[v_i]\leftarrow$ subtree of $T$ rooted in $v_i$\;
		$\Gamma[v_i]\leftarrow$ all taxa present in $T[v_i]$\;
		$D\leftarrow$ character $C$ restricted to $\Gamma[v_i]$\;
		$(D',s_i)\leftarrow$ Fitch($\Gamma[v_i]$,$T[v_i]$,$D$,$v_i$)\;
		\For{$w\in V(T[v_i])$}{
			$C'(w)\leftarrow D'(w)$\;
		}
	}
	\eIf{$C'(v_1)\cap C'(v_2)\neq\emptyset$}{
		$C'(\rho)\leftarrow C'(v_1)\cap C'(v_2)$\;
		\If{$\rho$ is the root with which the algorithm started}{
			Remove character states from $C'(\rho)$ at random until $C'(\rho)$ is a singleton\;
		}
		\If{$C'(\rho)$ is a singleton}{
			Recursively set $C'(v)=C'(\rho)$ or $C'(v)=s$ for the children $v$ of $\rho$ when $C'(\rho)\in C'(v)$ or $C'(\rho)\notin C'(v)$, $s\in C'(v)$, respectively\;
		}
		\Return{$C',s_1+s_2$}\;
	}{
		$C'(\rho)\leftarrow C'(v_1)\cup C'(v_2)$\;
		\Return{$C',s_1+s_2+1$}\;
	}
 }

 \caption{Fitch's algorithm}
 \label{algo::Fitch}
\end{algorithm}

\begin{obs}\label{obs::time}
If $N\in\mathcal{N}$ is time-consistent, \CorrB{then there exists an algorithm $A$ which respects a reticulation consistent ordering.}
\end{obs}

To see Observation~\ref{obs::time}, design an algorithm which iteratively propagates character states to vertices non-increasing in the evaluations of a time-stamp function $t$. Clearly, all children of a vertex $v$ which are internal tree vertices appear before $v$ in such a propagation sequence.

In the next section we consider the SPS where $N$ is a rooted, binary phylogenetic tree. In this case, the SPS can be solved in polynomial time using Fitch's algorithm~\citep{fitch71}. We construct a primal-dual algorithm $A$ for the SPS which emulates Fitch's algorithm. Subsequently, we show how to extend $A$ to phylogenetic networks $N\in\mathcal{N}$.

\section{An alternative proof of correctness for Fitch's algorithm}\label{sec3}
In this section, we consider the following special case of the SPS: given a set of taxa $\Gamma =\{1,2,\dots,n\}$, a rooted, binary phylogenetic tree $T$ of $\Gamma$ and a character $C$ of $\Gamma$, find an extension $C'$ of $C$ to $V(T)$ that minimizes score$(T,C')$. We call this problem the \emph{Binary Tree Parsimony Score Problem} (BTPS). It is well known that the BTPS can be solved in polynomial-time:

\begin{prop}\label{prop::fitch2}
The BTPS can be solved in polynomial time by Fitch's algorithm (see algorithm~\ref{algo::Fitch})~\citep{fitch71}.
\end{prop}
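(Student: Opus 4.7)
The plan is to prove correctness by structural induction on the rooted binary tree $T$, strengthening the statement so as to capture not only the optimal score but the full set of character states that can occur at a vertex $v$ in some optimum for the subproblem rooted at $v$. Concretely, I would establish the invariant that after the recursive call of Algorithm~\ref{algo::Fitch} on $T[v]$ terminates, the returned set $C'(v)\subseteq S_p$ is exactly the set of states $q$ such that some extension of $C$ (restricted to the taxa under $v$) to $V(T[v])$ with value $q$ at $v$ attains the minimum score on $T[v]$, and the returned integer equals this minimum. Correctness on the full tree follows from the invariant applied at the global root, while the top-down phase of the pseudocode simply materialises one concrete singleton witness.

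The base case is immediate: for a leaf $v\in V_{\text{ext}}$ the only feasible assignment is $C(v)$, giving $C'(v)=\{C(v)\}$ and score $0$. For the inductive step, fix an internal vertex $\rho$ with children $v_1, v_2$, write $(C'(v_i), s_i)$ for the pair returned by the recursive call on $T[v_i]$, and assume the invariant on both. The key identity is the additive decomposition
\begin{align*}
\text{score}(T[\rho], C') = \text{score}(T[v_1], C') + \text{score}(T[v_2], C') + d_H(C'(\rho), C'(v_1)) + d_H(C'(\rho), C'(v_2)),
\end{align*}
so for any fixed state $q$ placed at $\rho$, the cheapest extension of $T[\rho]$ realising $C'(\rho)=q$ has cost $s_1 + s_2 + [q\notin C'(v_1)] + [q\notin C'(v_2)]$ in Iverson-bracket notation. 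If $C'(v_1) \cap C'(v_2) \neq \emptyset$, this bracket sum vanishes precisely on the intersection, so the optimum score equals $s_1+s_2$ and the set of optimal root states is exactly $C'(v_1)\cap C'(v_2)$, matching the algorithm. If the intersection is empty, the bracket sum is at least $1$ everywhere and equals $1$ precisely for $q\in C'(v_1)\cup C'(v_2)$, again matching the algorithm.

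Finally, once the bottom-up phase terminates at the overall root, I would verify that the top-down propagation in the pseudocode produces a valid singleton extension whose total score equals the inductively-computed optimum: the rule ``set $C'(v)$ to $C'(\text{parent}(v))$ if this state lies in $C'(v)$, else pick an arbitrary element of $C'(v)$'' realises the tight bound in the decomposition at every edge. The runtime is $O(|V(T)|\cdot|S_p|)$, since each vertex is visited a constant number of times and every operation is a set union, intersection, or membership test on subsets of $S_p$. The main obstacle I anticipate is the tightness direction of Case~2 of the invariant, namely ruling out that any state outside $C'(v_1)\cup C'(v_2)$ could ever appear at $\rho$ in an optimal extension of $T[\rho]$; this is exactly where the set-valued strengthening of the inductive hypothesis pays off, since the decomposition forces at least two extra units of cost on such a $q$, whereas the $s_1+s_2+1$ bound is actually attained on the union.
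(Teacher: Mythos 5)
Your proof is correct: the strengthened invariant (after the call on $T[v]$, the returned set is exactly the set of states attaining the minimum score on $T[v]$ and the returned integer is that minimum) does go through by induction, the Iverson-bracket identity holds because, by the inductive hypothesis and integrality, any state outside a child's set costs at least one extra unit inside that child's subtree, and the top-down phase realises the bound edge by edge; this is the classical direct correctness argument for Fitch's algorithm. It is, however, a genuinely different route from the paper's treatment: the paper offers no direct proof of Proposition~\ref{prop::fitch2}, citing it as a known result, and instead devotes Section~\ref{sec3} to an \emph{alternative}, polyhedral derivation in which BTPS is written as the integer program of Formulation~\ref{form::btps}, the dual of its LP relaxation is Formulation~\ref{form::btpsDual}, and Proposition~\ref{prop::dualstep} together with Proposition~\ref{prop::treeOpt} shows that a primal-dual scheme (Algorithm~\ref{PD-HPS-trees}) emulating Fitch's intersection/union steps constructs feasible primal and dual solutions satisfying complementary slackness. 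Your induction is more elementary and self-contained, and it makes explicit the exact characterisation of the Fitch set at $v$ as the set of subtree-optimal root states, a fact the paper never isolates; what it does not buy is the dual certificate machinery, which is exactly what the paper needs in Section~\ref{sec4}, where the same complementary slackness bookkeeping (and its controlled failure in Case~3.3) yields the 2-approximation on tree-child networks, a setting to which your tree induction does not extend.
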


\CorrB{Notice that Fitch's algorithm solves the BTPS even if we remove the restriction to binary trees. Here, for the purpose of our analysis of the SPS in the next section, we do not discuss this aspect of the algorithm any further.} We construct an algorithm that operates like Algorithm~\ref{algo::Fitch} and prove that it solves the BTPS using linear programming duality arguments. This analysis will aid us in the next section in the development of an approximation algorithm for the SPS. Let $T$ be a rooted, binary phylogenetic tree of $\Gamma$, let $C:\Gamma\to S_p$ be a character of $\Gamma$ and let $C'$ be an extension of $C$ from $\Gamma$ to $V(T)$. \citet{fischer15} formulated an integer program to model the BTPS by introducing binary decision variables $x_v^s$ for all $v\in V_{\text{int}}^t$, $s\in S_p$, such that
\begin{align*}
x_v^s&=\begin{cases}1&\text{if $C'(v)=s$},\\
0&\text{otherwise},
\end{cases}
\end{align*}
and edge variables $c_e$ for all $e\in E$, such that
\begin{align*}
c_{(u,v)}&=d_H(C'(u),C'(v))=\begin{cases}1&\text{if $C'(u)\neq C'(v)$},\\
0&\text{otherwise}.
\end{cases}
\end{align*}
Then, given character states $C(v)$, $v\in V_{\text{ext}}$, the following program is an IP formulation of the BTPS:
\begin{form}\label{form::btps}
\begin{align}
\min~~\sum_{e\in E(T)}c_e~~~~~~~~~&\\
\text{s.t.}\,~~~~~~c_e-x_u^s+x_v^s&\geq 0~&~&\forall\,e=(u,v)\in E(T),s\in S_p\label{1::con1}\\
c_e+x_u^s-x_v^s&\geq 0~&~&\forall\,e=(u,v)\in E(T),s\in S_p\label{1::con2}\\
\sum_{s=0}^px_v^s&= 1~&~&\forall\,v\in V(T)\label{1::con3}\\
x_v^{C(v)}&=1~&~&\forall\,v\in V_{\text{ext}}\label{1::con4}\\
c_e&\in\{0,1\}~&~&\forall\,e\in E(T)\\
x_v^s&\in\{0,1\}~&~&\forall\,v\in V(T),s\in S_p
\end{align}
\end{form}
Next, consider the dual of the LP relaxation of Formulation~\ref{form::btps}:
\begin{form}\label{form::btpsDual}
\begin{align}
\max~~\sum_{v\in V(T)}\mu_v+\sum_{v\in V_{\text{ext}}}\eta_v\,~~~~~~~~~~~~~~~~~~~~~~~~~~~~~~~~~~~~~~~~~~~~~~~~~~~~~~~~~~~~~&\label{2::obj}\\
\text{s.t.}~~~~~~~~~~~~~~~~~~~~~~~~~~~~~~~~~~~~~~~~~~~~~~~~~~~~~~~~~~~~~~~~~~~~~~\sum_{s=0}^p\left(\beta_{u,v}^s+\beta_{v,u}^s\right)&\leq 1~&~&\forall\,(u,v)\in E(T)\label{2::con1}\\
\mu_v+\sum_{(v,w)\in E(T)}\left(\beta_{w,v}^s-\beta_{v,w}^s\right)+\sum_{(u,v)\in E(T)}\left(\beta_{u,v}^s-\beta_{v,u}^s\right)&\leq 0~&~&\forall\,v\in V_{\text{int}}^t,s\in S_p\label{con1::mu1}\\
\mu_v+\sum_{(u,v)\in E(T)}\left(\beta_{u,v}^s-\beta_{v,u}^s\right)&\leq 0~&~&\forall\,v\in V_{\text{ext}},s\in S_p\setminus\{C(v)\}\label{con1::mu2}\\
\mu_v+\eta_v+\sum_{(u,v)\in E(T)}\left(\beta_{u,v}^s-\beta_{v,u}^s\right)&\leq 0~&~&\forall\,v\in V_{\text{ext}},s=C(v)\label{con1::mu+eta}\\
\beta_{u,v}^s,\beta_{v,u}^s&\geq 0~&~&\forall\,(u,v)\in E(T),s\in S_p\\
\mu_v&\in\mathbb{R}~&~&\forall\,v\in V(T)\\
\eta_v&\in\mathbb{R}~&~&\forall\,v\in V_{\text{ext}}
\end{align}
\end{form}
\CorrB{Notice that dual variables $\beta_{u,v}^s$, $\beta_{v,u}^s$, $\mu_v$ and $\eta_v$ correspond to the primal constraints of form~\eqref{1::con1},~\eqref{1::con2},~\eqref{1::con3} and~\eqref{1::con4}, respectively. Conversely, the primal variables $c_e$ and $x_v^s$ correspond to the dual constraints of form~\eqref{2::con1} and~\eqref{con1::mu1},~\eqref{con1::mu2},~\eqref{con1::mu+eta}, respectively. Since constraints~\eqref{1::con4} depend only on both external vertices $V_{\text{ext}}$ and the character $C$, dual constraints~\eqref{con1::mu2} and~\eqref{con1::mu+eta} differentiate this structure imposed on primal variables $x_v^s$ from constraints~\eqref{con1::mu1}. Furthermore, observe that we do not introduce bounded dual variables for the primal LP relaxation constraints $c_e\leq 1$ and $x_v^s\leq 1$ because they produce an LP dual equivalent to Formulation~\ref{form::btpsDual} and introduce unnecessary degrees of freedom in the construction of optimal solutions to the BTPS. Indeed, imposing the primal LP relaxation constraints adds pairwise independent non-positive numbers to the lefthand side of constraints~\eqref{2::con1} to~\eqref{con1::mu+eta} and the objective function~\eqref{2::obj}.}

Formulation~\ref{form::btpsDual} can be written in a simpler form when we take the particular structure of the rooted, binary phylogenetic tree $T$ into account. Observe that every vertex in $V_{\text{int}}^t$ appears twice as $u$ for $(u,v)\in E(T)$ and once as $v$ for $(u,v)\in E(T)$. Hence, for $v\in V_{\text{int}}^t$, $s\in S_p$ with $(u,v),(v,w_1),(v,w_2)\in E(T)$, constraint~\eqref{con1::mu1} is of the form
\begin{align}
\mu_v+\beta_{w_1,v}^s-\beta_{v,w_1}^s+\beta_{w_2,v}^s-\beta_{vw_2}^s+\beta_{u,v}^s-\beta_{v,u}^s\leq 0\label{con::bbb}
\end{align}
and, for $v\in V_{\text{ext}}$, $s\in S_p$ with $(u,v)\in E(T)$, constraints~\eqref{con1::mu2} and~\eqref{con1::mu+eta} are of the form
\begin{align}
\mu_v+\beta_{u,v}^s-\beta_{v,u}^s&\leq 0\label{con::00b}\\
\text{and}~~\mu_v+\eta_v+\beta_{u,v}^s-\beta_{v,u}^s&\leq 0,\label{con::e00b}
\end{align}
respectively. In the following paragraphs we introduce all of the components that we will need to construct a primal-dual algorithm emulating Algorithm~\ref{algo::Fitch}. To this end, we will make frequent use of the complementary slackness conditions associated with the LP relaxation of Formulation~\ref{form::btps} and Formulation~\ref{form::btpsDual}:
\begin{align}
\left(c_e-x_u^s+x_v^s\right)\beta_{u,v}^s&=0~&~&\forall\,e=(u,v)\in E(T),s\in S_p\label{cs::dv1}\\
\left(c_e+x_u^s-x_v^s\right)\beta_{v,u}^s&=0~&~&\forall\,e=(u,v)\in E(T),s\in S_p\label{cs::dv2}\\
\left(\sum_{s=0}^p\left(\beta_{u,v}^s+\beta_{v,u}^s\right)-1\right)c_e&=0~&~&\forall\,e=(u,v)\in E(T)\label{cs::pv1}\\
\left(\mu_v+\beta_{w_1,v}^s-\beta_{v,w_1}^s+\beta_{w_2,v}^s-\beta_{vw_2}^s+\beta_{u,v}^s-\beta_{v,u}^s\right)x_v^s&=0~&~&\forall\,v\in V_{\text{int}}^t,s\in S_p\label{cs::pv2}\\
\left(\sum_{s=0}^px_v^s-1\right)\mu_v&=0~&~&\forall\,v\in V(T)\label{cs::dv3}\\
\left(x_v^{C(v)}-1\right)\eta_v&=0~&~&\forall\,v\in V_{\text{ext}}\label{cs::dv4}\\
\left(\mu_v+\beta_{u,v}^s-\beta_{v,u}^s\right)x_v^s&=0~&~&\forall\,v\in V_{\text{ext}},s\in S_p\setminus\{C(v)\}\label{cs::pv3}\\
\left(\mu_v+\eta_v+\beta_{u,v}^s-\beta_{v,u}^s\right)x_v^s&=0~&~&\forall\,v\in V_{\text{ext}},s=C(v)\label{cs::pv4}
\end{align}
For specific assumptions on the values of dual variables we provide some sufficient conditions for the local optimality of our primal and dual variables derived from complementary slackness that we will use throughout this article to prove the correctness of our algorithms. For a phylogenetic tree $T$ of $\Gamma$ and a character $C:\Gamma\to S_p$, with a little abuse of notation we say that a function $C':V(T)\to 2^{S_p}$ is an extension of $C$ from $\Gamma$ to $V(T)$. Such an extension $C'$ coincides with the notion of an extension in the definition of the SPS only if $C'(v)$ is a singleton character state for all $v\in V(T)$. \CorrB{Otherwise, we call $C'$ a \emph{set-extension} of $C$ from $\Gamma$ to $V(T)$.} It will be useful throughout this article to think of extensions of characters $C$ of $\Gamma$ as constructed by taking unions and intersections of the singleton character states of $\Gamma$, which will become more apparent in the next section.

\begin{prop}\label{prop::dualstep}
Let $\Gamma$ be a set of taxa and let $T$ be a rooted, binary phylogenetic tree of $\Gamma$ with root $\rho\in V(T)$. Let $C$ be a character of $\Gamma$ and \CorrB{let $C'$ be set-extension of $C$ from $\Gamma$ to $V(T)$}. Let $v\in V(T)$, $v\neq\rho$, with $(u,v),(v,w_1),(v,w_2)\in E(T)$, $w_1\neq w_2$, for $v\in V_{\text{int}}^t$ and $(u,v)\in E(T)$ for $v\in V_{\text{ext}}$. Assume $C'(w_1)\neq\emptyset\neq C'(w_2)$ and, for $v\in V_{\text{int}}^t$, $i=1,2$, $s\in C'(w_i)$, we have $\beta_{w_i,v}^s=1/|C'(w_i)|$.
\begin{enumerate}
\item Let $v\in V_{\text{int}}^t$, $C'(v)=C'(w_1)\cap C'(w_2)$. 
Then, we satisfy complementary slackness conditions~\eqref{cs::dv1} to~\eqref{cs::dv3} for $e=(u,v)$ and $v$ if one of the following two scenarios holds:
\begin{enumerate}[(i)]
\item $C'(v)=C'(w_1)=C'(w_2)=s$, $C'(u)=s'\neq s$ and $\beta_{v,u}^s=1$, $\beta_{v,u}^t=0$, $t\neq s$, $\beta_{u,v}^t=0$, $t\in S_p$, $\mu_v=-1$. 
\item $C'(v)=C'(w_1)=C'(w_2)=C'(u)$ and $\beta_{v,u}^s=1/|C'(v)|$, $s\in C'(v)$, $\beta_{v,u}^t=0$, $t\in S_p\setminus C'(v)$, $\beta_{u,v}^t=0$, $t\in S_p$, $\mu_v=-1/|C'(v)|$. 
\end{enumerate}
\item Let $v\in V_{\text{ext}}$ and $C'(v)=C(v)$. Then, we satisfy complementary slackness conditions~\eqref{cs::dv1} to~\eqref{cs::pv1} and \eqref{cs::dv3} to~\eqref{cs::pv4} for $e=(u,v)$ and $v$ if $\beta_{v,u}^{C(v)}=1$, $\mu_v=0$ and $\eta_v=1$.
\item Let $v\in V_{\text{int}}^t$, $C'(v)\nsubseteq C'(w_1)\cup C'(w_2)$. Then, there exist no values for primal and dual variables satisfying complementary slackness conditions~\eqref{cs::dv1} to~\eqref{cs::dv3} for $e=(u,v)$ and $v$.
\item Let $v\in V_{\text{int}}^t$, $C'(v)\subseteq C'(w_1)\cup C'(w_2)$, $C'(v)\cap C'(w_1)\neq\emptyset$ and $C'(v)\cap C'(w_1)\cap C'(w_2)=\emptyset$. 
\begin{enumerate}[(i)]
\item $C'(v)=s$, $C'(u)=t\neq s$. Then, there exist no values for primal and dual variables satisfying complementary slackness conditions~\eqref{cs::dv1} to~\eqref{cs::dv3} for $e=(u,v)$ and $v$.
\item $|C'(w_1)|=|C'(w_2)|$ and $\beta_{v,u}^s=1/|C'(v)|$, $\mu_v=1/|C'(v)|-1/|C'(w_1)|$. Then, we satisfy complementary slackness conditions~\eqref{cs::dv1} to~\eqref{cs::dv3} for $e=(u,v)$ and $v$.
\end{enumerate}
\end{enumerate}
\end{prop}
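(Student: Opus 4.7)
The plan is to verify each case by direct substitution into the complementary slackness conditions \eqref{cs::dv1}--\eqref{cs::pv4}, restricted to the edge $e=(u,v)$ and vertex $v$. I would fix the primal LP variables by the natural LP extension of $C'$: set $x_v^s = 1/|C'(v)|$ for $s\in C'(v)$ and $x_v^s = 0$ otherwise, and set $c_e = 0$ when the two endpoint sets are equal and $c_e = 1$ when they differ. I would also adopt the convention, inherited from the bottom-up emulation of Fitch, that $\beta_{v,w_i}^s = 0$ for all $s$, since these duals have not yet been touched by the algorithm when it arrives at $v$.

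For the feasibility cases (1, 2 and 4(ii)), the approach is direct substitution. Most of the seven CS families vanish trivially because either the relevant $\beta$ is $0$, $x_v^s = 0$, or the slack expression inside is already zero. The one non-trivial check is \eqref{cs::pv2} at $v$ for each $s\in C'(v)$: the six $\beta$-terms telescope (using the hypothesis $\beta_{w_i,v}^s = 1/|C'(w_i)|$ and $\beta_{v,w_i}^s=0$), and the specified value of $\mu_v$ is chosen precisely to make them cancel. Case 1(i) is the singleton version where every $\beta$ is $0$ or $1$; case 1(ii) is its fractional analogue with common factor $1/|C'(v)|$; and case 4(ii) performs the same telescoping but with $\mu_v = 1/|C'(v)| - 1/|C'(w_1)|$ as the leftover balancing term, which is only self-consistent because the hypothesis forces $|C'(w_1)| = |C'(w_2)|$.

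The infeasibility cases 3 and 4(i) require arguing by contradiction. In case 3, choose a witness $s^*\in C'(v)\setminus (C'(w_1)\cup C'(w_2))$: the hypothesis gives $\beta_{w_1,v}^{s^*}=\beta_{w_2,v}^{s^*}=0$, and since $x_v^{s^*}>0$ the CS equation \eqref{cs::pv2} at $v$ for $s^*$ pins $\mu_v$ down in terms of the outgoing $\beta$'s and the dual variables on $(u,v)$. Comparing this forced value of $\mu_v$ with either the CS equality at another state of $C'(v)$ or, if $C'(v)=\{s^*\}$, with the dual-feasibility inequality \eqref{con1::mu1} for a state $s\in C'(w_1)\cup C'(w_2)$, produces a strictly negative $\beta$ and hence a contradiction. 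In case 4(i) one has $c_{(u,v)}=1$, $x_v^s=x_u^t=1$; the edge CS conditions \eqref{cs::dv1}, \eqref{cs::dv2}, \eqref{cs::pv1} then force $\beta_{u,v}^s=0$, $\beta_{v,u}^t=0$ and $\sum_r(\beta_{u,v}^r+\beta_{v,u}^r)=1$. Plugging these into \eqref{cs::pv2} at $v$ and comparing with \eqref{con1::mu1} for a state $r\in C'(w_2)\setminus\{s\}$ again contradicts $\beta\geq 0$.

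The substitutions in cases 1, 2 and 4(ii) are mechanical but error-prone because of the six $\beta$-terms in \eqref{cs::pv2} and the implicit zero values of $\beta_{v,w_i}^s$. The real obstacle is the infeasibility arguments: identifying which second state to pit against $s^*$, and which CS equation versus dual-feasibility inequality to compare it to, is where the content of the proof lies, especially in case 4(i) where $C'(v)$ is a singleton and the contradiction can only be produced by invoking a state in $C'(w_2)\setminus C'(v)$ rather than another state of $C'(v)$.
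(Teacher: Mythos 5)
Your treatment of the feasibility cases (1, 2, 4(ii)) is essentially the paper's own verification: fix $x_v^s=1/|C'(v)|$, zero the downward duals $\beta_{v,w_i}^s$, put the parent-edge mass on $\beta_{v,u}^s$, and read $\mu_v$ off the tight instance of \eqref{cs::pv2}. (Minor caution: your rule ``$c_e=1$ whenever the endpoint sets differ'' clashes with \eqref{cs::dv2} in case 2 when $C(v)\in C'(u)\neq\{C(v)\}$, since then the slack $c_e+x_u^{C(v)}-x_v^{C(v)}$ is positive while $\beta_{v,u}^{C(v)}=1$; the consistent choice is the intersection-based, possibly fractional value $c_e=x_v^{C(v)}-x_u^{C(v)}$ that the paper implicitly uses.)

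The genuine gap is in the infeasibility cases 3 and 4(i). You impose the ``untouched duals are zero'' convention only on $\beta_{v,w_1},\beta_{v,w_2}$ and deliberately leave the incoming duals $\beta_{u,v}$ free: in 4(i) you force only $\beta_{u,v}^s=0$, $\beta_{v,u}^t=0$ and $\sum_r(\beta_{u,v}^r+\beta_{v,u}^r)=1$, so mass may sit on $\beta_{u,v}^t$. Then \eqref{cs::pv2} at the witness state gives only $\mu_v=\beta_{v,u}^s-1/|C'(w_1)|$ (resp.\ $\mu_v=\beta_{v,u}^{s^*}-\beta_{u,v}^{s^*}$ in case 3), which is not forced nonnegative, and your comparison with \eqref{con1::mu1} at $r\in C'(w_2)$ yields only $\beta_{v,u}^s\leq 1/|C'(w_1)|-1/|C'(w_2)|$ -- no contradiction whenever $|C'(w_1)|\leq|C'(w_2)|$. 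Indeed, with $\beta_{u,v}$ free no contradiction exists at all: take $C'(v)=C'(w_1)=\{s\}$, $C'(u)=\{t\}$, $C'(w_2)=\{a,b\}$ disjoint from $\{s,t\}$, and set $\beta_{v,u}^s=\beta_{u,v}^t=1/2$, $\mu_v=-1/2$; this satisfies \eqref{cs::dv1}--\eqref{cs::dv3} for $e=(u,v)$ and $v$, and even the dual constraints \eqref{con1::mu1} at $v$ and the edge constraint. An analogous assignment (e.g.\ $C'(u)=C'(v)=\{s^*\}$, $\beta_{u,v}^{s^*}=1$, $\mu_v=-1$) defeats your case-3 argument. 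The paper's proof works only because it restricts to its canonical construction from case 1 -- $\beta_{u,v}^t=0$ for all $t$ and $\beta_{v,u}^s=1/|C'(v)|$ -- which forces $\mu_v\geq 0$, whereupon any state of $C'(w_1)\cup C'(w_2)$ (case 3), resp.\ of $C'(w_2)$ (case 4(i)), violates \eqref{con1::mu1}; the blanket non-existence claims are to be read relative to that construction (the paper asserts its uniqueness rather than quantifying over all duals). To repair your argument you must adopt the same convention $\beta_{u,v}\equiv 0$ on the incoming edge; as written, the contradiction you announce does not follow.
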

\begin{proof} Throughout this proof we always choose $x_v^s=1/|C'(v)|$ for $v\in V(T)$, $s\in S_p$.\\~\\
\underline{1.:} For $s\in C'(v)$, let $\beta_{v,u}^s=1/|C'(v)|$, i.e., condition~\eqref{cs::pv1} is satisfied. Let $C'(u)\cap C'(v)=\emptyset$. Then, $c_{(u,v)}=1$. Hence, for $s\in C'(v)$, $c_{(u,v)}+x_u^s-x_v^s=0$ only if $x_v^s=1-x_u^s$. Moreover, for $t\in C'(u)$, $c_{(u,v)}-x_u^t+x_v^t=0$ only if $x_u^t=1-x_v^t$. Then, fixing $C'(v)=s$ and $C'(u)=s'\neq s$, conditions~\eqref{cs::dv1}, \eqref{cs::dv2} and~\eqref{cs::dv3} are satisfied. Then, conditions~\eqref{cs::pv2} are equivalent to
\begin{align*}
\mu_v+\frac{1}{|C'(w_1)|}-0+\frac{1}{|C'(w_2)|}-0+0-1 &\leq 0,\\
\mu_v+\frac{1}{|C'(w_1)|}-0+0-0+0 -0&\leq 0~&~&\forall\,t\in C'(w_1)\setminus C'(w_2),\\
\mu_v+0-0+\frac{1}{|C'(w_2)|}-0+0 -0&\leq 0~&~&\forall\,t\in C'(w_2)\setminus C'(w_1),\\
\mu_v+0-0+0-0+0 -0&\leq 0~&~&\forall\,t\in S_p\setminus\left(C'(w_2)\cup C'(w_1)\right).
\end{align*}
Since $x_v^s=1$, we require $\mu_v=1-1/|C'(w_1)|-1/|C'(w_2)|$. Hence, conditions~\eqref{cs::pv2} are satisfied only if $C'(w_1)=C'(w_2)$. Since $C'(v)=C'(w_1)\cap C'(w_2)$, we have $C'(v)=C'(w_1)=C'(w_2)$.

Next, let $C'(u)\cap C'(v)\neq\emptyset$. Then, $c_{(u,v)}=0$ and, for $s\in S_p$, $c_{(u,v)}+x_u^s-x_v^s=0$ only if $x_u^s=x_v^s$. Hence, fixing $C'(v)=C'(u)$, conditions~\eqref{cs::dv1}, \eqref{cs::dv2} and~\eqref{cs::dv3} are satisfied. Then, conditions~\eqref{cs::pv2} are equivalent to
\begin{align*}
\mu_v+\frac{1}{|C'(w_1)|}-0+\frac{1}{|C'(w_2)|}-0+0-\frac{1}{|C'(v)|} &\leq 0~&~&\forall\,s\in C'(v),\\
\mu_v+\frac{1}{|C'(w_1)|}-0+0-0+0 -0&\leq 0~&~&\forall\,t\in C'(w_1)\setminus C'(w_2),\\
\mu_v+0-0+\frac{1}{|C'(w_2)|}-0+0 -0&\leq 0~&~&\forall\,t\in C'(w_2)\setminus C'(w_1),\\
\mu_v+0-0+0-0+0 -0&\leq 0~&~&\forall\,t\in S_p\setminus\left(C'(w_1)\cup C'(w_2)\right).
\end{align*}
Since $x_v^s=1/|C'(v)|$, $s\in C'(v)$, we require $\mu_v=1/|C'(v)|-1/|C'(w_1)|-1/|C'(w_2)|$. Hence, conditions~\eqref{cs::pv2} are satisfied only if $1/|C'(v)|\leq\min\{1/|C'(w_1)|,1/|C'(w_2)|\}$. Equivalently, $|C'(v)|\geq\max\{|C'(w_1)|,|C'(w_2)|\}$. Since $C'(v)=C'(w_1)\cap C'(w_2)$, we arrive at $C'(v)=C'(w_1)=C'(w_2)$.\\~\\
\underline{2.:} Let $C'(u)\cap C'(v)=\emptyset$, i.e., $C(v)\notin C'(u)$. Then, analogous to the construction in the proof of Proposition~\ref{prop::dualstep}.1, fixing $C'(v)=C(v)$ and $C'(u)=t\neq C(v)$ gives us a construction of dual variables such that conditions~\eqref{cs::dv1} to~\eqref{cs::pv1} and~\eqref{cs::dv4} are satisfied, and conditions~\eqref{cs::pv3} and~\eqref{cs::pv4} can be equivalently stated as
\begin{align*}
\mu_v+\eta_v +0-1 &=0,~~&~\mu_v +0-0&\leq 0~~~~\forall\,t\in S_p\setminus C'(v).
\end{align*}
These conditions are satisfied for $\eta =1$ and $\mu=0$.

Next, let $C'(u)\cap C'(v)\neq\emptyset$, i.e., $C(v)\in C'(u)$. Then, we can draw analogous conclusions because $|C'(v)|=1$.\\~\\
\underline{3.:} Without loss of generality $C'(u)\cap C'(v)=\emptyset$. Then, analogous to the construction in the proof of Proposition~\ref{prop::dualstep}.1, fixing $C'(v)=s$, $C'(u)=s'\neq s$, gives us a construction of dual variables such that conditions~\eqref{cs::dv1} to~\eqref{cs::pv1} and~\eqref{cs::dv3} are satisfied, and  conditions~\eqref{cs::pv2} can be equivalently stated as
\begin{align*}
\mu_v+0-0+0-0+0-1 &=0\\
\mu_v+\frac{1}{|C'(w_1)|}-0+\frac{1}{|C'(w_2)|}-0+0-0&\leq 0~&~&\forall\,t\in C'(w_1)\cap C'(w_2)\\
\mu_v+\frac{1}{|C'(w_1)|}-0+0-0+0-0&\leq 0~&~&\forall\,t\in C'(w_1)\setminus C'(w_2)\\
\mu_v+0-0+\frac{1}{|C'(w_2)|}-0+0-0&\leq 0~&~&\forall\,t\in C'(w_2)\setminus C'(w_1)\\
\mu_v+0-0+0-0+0-0&\leq 0~&~&\forall\,t\in S_p\setminus\left(C'(v)\cup C'(w_1)\cup C'(w_2)\right)
\end{align*}
These conditions are satisfied only if $C'(w_1)=C'(w_2)=\emptyset$ which is impossible by assumption. Recall from the proof of Proposition~\ref{prop::dualstep}.1 that our construction of dual variables is unique because any other choice for character state sets $C'(v)$ and $C'(u)$ does not satisfy conditions~\eqref{cs::dv1} to~\eqref{cs::pv1} and~\eqref{cs::dv3}.\\~\\
\underline{4.:} Let $C'(u)\cap C'(v)=\emptyset$. Without loss of generality $C'(v)\cap C'(w_2)=\emptyset$. Then, we can draw analogous conclusions to the proof of Proposition~\ref{prop::dualstep}.3 by fixing $C'(v)=s$, $C'(u)=s'\neq s$, i.e., $s\in C'(w_1)$, and for conditions~\eqref{cs::pv2} defined by
\begin{align*}
\mu_v+\frac{1}{|C'(w_1)|}-0+0-0+0-1 &=0\\
\mu_v+\frac{1}{|C'(w_1)|}-0+\frac{1}{|C'(w_2)|}-0+0-0&\leq 0~&~&\forall\,t\in C'(w_1)\cap C'(w_2)\\
\mu_v+\frac{1}{|C'(w_1)|}-0+0-0+0-0&\leq 0~&~&\forall\,t\in C'(w_1)\setminus C'(w_2),t\neq s\\
\mu_v+0-0+\frac{1}{|C'(w_2)|}-0+0-0&\leq 0~&~&\forall\,t\in C'(w_2)\setminus C'(w_1)\\
\mu_v+0-0+0-0+0-0&\leq 0~&~&\forall\,t\in S_p\setminus\left(C'(w_1)\cup C'(w_2)\right).
\end{align*}

Next, let $C'(u)\cap C'(v)\neq\emptyset$. Then, analogous to the construction in the proof of Proposition~\ref{prop::dualstep}.1, fixing $C'(v)=C'(u)$, conditions~\eqref{cs::dv1} to~\eqref{cs::dv3} and~\eqref{cs::pv1} are satisfied, and, using our assumptions on sets $C'(w_1)$ and $C'(w_2)$, conditions~\eqref{cs::pv2} are equivalent to
\begin{align}
\mu_v+0-0+0-0+0-\frac{1}{|C'(v)|} &= 0~&~&\forall\,s\in C'(v)\setminus\left(C'(w_1)\cup C'(w_2)\right),\label{4eq1}\\
\mu_v+\frac{1}{|C'(w_1)|}-0+0-0+0-\frac{1}{|C'(v)|} &= 0~&~&\forall\,s\in C'(v)\cap C'(w_1),\label{4eq2}\\
\mu_v+0-0+\frac{1}{|C'(w_2)|}-0+0-\frac{1}{|C'(v)|} &= 0~&~&\forall\,s\in C'(v)\cap C'(w_2),\label{4eq3}\\
\mu_v+\frac{1}{|C'(w_1)|}-0+0-0+0 -0&\leq 0~&~&\forall\,t\in C'(w_1)\setminus\left(C'(v)\cup C'(w_2)\right),\label{4ineq1}\\
\mu_v+\frac{1}{|C'(w_1)|}-0+\frac{1}{|C'(w_2)|}-0+0 -0&\leq 0~&~&\forall\,t\in C'(w_1)\cap C'(w_2),\label{4ineq2}\\
\mu_v+0-0+\frac{1}{|C'(w_2)|}-0+0 -0&\leq 0~&~&\forall\,t\in C'(w_2)\setminus\left(C'(v)\cup C'(w_1)\right),\label{4ineq3}\\
\mu_v+0-0+0-0+0 -0&\leq 0~&~&\forall\,t\in S_p\setminus\left(C'(v)\cup C'(w_1)\cup C'(w_2)\right)\label{4ineq4}.
\end{align}
Since $C'(v)\cap C'(w_1)\neq\emptyset$, equations~\eqref{4eq1} and~\eqref{4eq3} require that $\mu_v=1/|C'(v)|-1/|C'(w_1)|$, $C'(v)\subseteq C'(w_1)\cup C'(w_2)$ and either $C'(v)\cap C'(w_2)=\emptyset$ or $|C'(w_1)|=|C'(w_2)|$. This means, inequalities~\eqref{4ineq1} and~\eqref{4ineq2} do not hold, i.e., we require $C'(w_1)\subseteq C'(v)\cup C'(w_2)$ and $C'(w_1)\cap C'(w_2)=\emptyset$. This means, either $C'(w_1)=C'(v)$ or $|C'(w_1)|=|C'(w_2)|$. Furthermore, from inequalities~\eqref{4ineq3} we infer that $C'(w_2)\subseteq C'(v)$. This is not possible when $C'(v)=C'(w_1)$. Thus, we require $|C'(w_1)|=|C'(w_2)|$.
\end{proof}

Conditions in Proposition~\ref{prop::dualstep} hold similarly for the root of a rooted, binary phylogenetic tree:

\begin{cor}\label{prop::dualstep::root}
Let $\Gamma$ be a set of taxa and let $T$ be a rooted, binary phylogenetic tree of $\Gamma$ with root $v\in V(T)$. Let $C$ be a character of $\Gamma$ and \CorrB{let $C'$ be an set-extension of $C$ from $\Gamma$ to $V(T)$.} Let $(v,w_1),(v,w_2)\in E(T)$, $w_1\neq w_2$. Assume $C'(w_1)\neq\emptyset\neq C'(w_2)$ and, for $i=1,2$, $s\in C'(w_i)$, we have $\beta_{w_i,v}^s=1/|C'(w_i)|$. Then, Propositions~\ref{prop::dualstep}.1 to~\ref{prop::dualstep}.4 hold for the same choices of variable values except for variable $\mu_v$ which needs to be substituted by $\mu_v-1/|C'(v)|$.
\end{cor}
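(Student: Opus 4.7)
The plan is to re-run the case analysis of Proposition~\ref{prop::dualstep} with the single structural change that, since $v$ is now the root, no edge of the form $(u,v)$ exists. In the dual program this removes the variables $\beta_{u,v}^s,\beta_{v,u}^s$ and trims the left-hand side of the relevant instance of~\eqref{con1::mu1} by the term $\beta_{u,v}^s-\beta_{v,u}^s$. Simultaneously, the complementary slackness conditions~\eqref{cs::dv1}--\eqref{cs::pv1} tied to the nonexistent parent edge drop out entirely, while~\eqref{cs::pv2} keeps its form but without the contribution of $\beta_{u,v}^s-\beta_{v,u}^s$.

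The key observation is that the portion of the proof of Proposition~\ref{prop::dualstep} concerned with the children $w_1,w_2$ — namely the variables $\beta_{w_i,v}^s,\beta_{v,w_i}^s$ and the choice $x_v^s=1/|C'(v)|$ — carries over verbatim, since the hypothesis $\beta_{w_i,v}^s=1/|C'(w_i)|$ for $s\in C'(w_i)$ is identical. Thus I would track only the quantity affected by removing the parent edge: in each sub-case of the original proof the active slackness equation for $s\in C'(v)$ read $\mu_v+(\text{child terms})+\beta_{u,v}^s-\beta_{v,u}^s=0$, and the choices made there set $\beta_{u,v}^s=0$ and $\beta_{v,u}^s=1/|C'(v)|$, contributing exactly $-1/|C'(v)|$ to the left-hand side. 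Because this term is absent at the root, restoring the equality forces the new $\mu_v$ to be smaller by $1/|C'(v)|$ than the old one — precisely the substitution stated.

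It remains to check that the dual inequalities for states $s\notin C'(v)$ (where the corresponding $x_v^s$ is zero, so only feasibility of the dual inequality is required) still hold after the downward shift of $\mu_v$. Since $\beta_{v,u}^s=0$ for those states in the original proof, the root inequalities coincide with their non-root counterparts except for the removal of $-\beta_{v,u}^s=0$, and the shift $\mu_v\mapsto \mu_v-1/|C'(v)|$ is strictly negative, so these inequalities only become slacker and therefore remain valid. The infeasibility arguments in cases~3 and~4(i) of Proposition~\ref{prop::dualstep} transfer verbatim, because their obstructions arose from constraints involving only the children $w_1,w_2$ and not the parent. Case~2 is vacuous here, as the root of a tree on $n\geq 3$ taxa is never a leaf. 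The main obstacle is therefore purely bookkeeping: one iterates through cases 1(i), 1(ii) and 4(ii) to confirm that the shifted $\mu_v$ saturates the right equation and leaves the remaining inequalities satisfied.
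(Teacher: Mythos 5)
Your core computation is exactly the intended justification (the paper states this corollary without a written proof): at the root the dual constraint corresponding to~\eqref{con1::mu1} lacks the parent terms $\beta_{u,v}^s-\beta_{v,u}^s$, and in every positive construction of Proposition~\ref{prop::dualstep} those terms contributed $-1/|C'(v)|$ for $s\in C'(v)$ (via $\beta_{v,u}^s=1/|C'(v)|$, $\beta_{u,v}^s=0$) and $0$ for $s\notin C'(v)$, so lowering $\mu_v$ by $1/|C'(v)|$ restores the saturated equalities and only slackens the remaining inequalities, while the complementary slackness conditions attached to the nonexistent parent edge simply disappear. Your treatment of cases 1(i), 1(ii), 4(ii) and the vacuity of case 2 is correct and matches the values the paper later uses for the root in Algorithm~\ref{PD-HPS-trees}.

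The one step that does not hold as you state it is the claim that the infeasibility arguments of cases 3 and 4(i) ``transfer verbatim because their obstructions arose from constraints involving only the children''. In both cases the saturated instance of~\eqref{cs::pv2} at a non-root vertex does use the parent edge: in Proposition~\ref{prop::dualstep}.4(i) it reads $\mu_v+1/|C'(w_1)|-1=0$, the $-1$ coming from $\beta_{v,u}^s$, forcing $\mu_v=1-1/|C'(w_1)|\geq 0$, and it is precisely this non-negativity that clashes with the child inequalities. At the root the same equation instead forces $\mu_v=-1/|C'(w_1)|$, and if $C'(w_1)\cap C'(w_2)=\emptyset$ and $|C'(w_1)|\leq|C'(w_2)|$ every remaining inequality is then satisfied, so the infeasibility claim of 4(i) does not carry over. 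Case 3 does survive the shift, but not for the reason you give: there the saturated equation yields $\mu_v=0$ at the root, which still violates one of the child inequalities $\mu_v+1/|C'(w_1)|\leq 0$, $\mu_v+1/|C'(w_2)|\leq 0$ or $\mu_v+1/|C'(w_1)|+1/|C'(w_2)|\leq 0$, at least one of which is present because $C'(w_1)\neq\emptyset\neq C'(w_2)$. The consistent reading---and the one compatible with how the corollary is used later---is that sub-cases 1(i), 3 and 4(i) are phrased in terms of $C'(u)$ and must be reinterpreted (or treated as vacuous) at the root; if you wish to retain the negative statements, 3 should be rechecked as above and 4(i) weakened or dropped. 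Since only the positive constructions are invoked at the root downstream, this does not damage the overall result, but that sentence of your argument needs correcting.
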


\CorrB{Now, we will outline a primal-dual scheme for the BTPS (see algorithm~\ref{PD-HPS-trees}). We will follow the same rationale as algorithm~\ref{algo::Fitch}: recursively propagate the character states on leaves to the root by taking an intersection of character states of siblings $v_1,v_2$ if $C'(v_1)\cap C'(v_2)\neq\emptyset$ (Dual step (i)) and the union of character states otherwise (Dual step (ii)). Subsequently we translate these decisions for the dual to an (locally) optimal solution of the primal in the Primal step using Proposition~\ref{prop::dualstep} and Corollary~\ref{prop::dualstep::root} which includes the "clean-up phase" of Fitch's algorithm, i.e., the removal of redundant character states.} To this end, consider a character $C:\Gamma\to S_p$ and let $s\in S_p$ such that $\left|\{v\in\Gamma\,:\,C(v)=s\}\right|$ is maximum.
\begin{enumerate}
\item Choose an initial feasible primal solution: set $C'(v)=s$, $x_v^s=1$ for all $v\in V_{\text{int}}$ and $c_{(u,v)}=|x_u^s-x_v^s|$ for all $(u,v)\in E(T)$.
\item Choose an initial infeasible dual solution: set $\beta_{u,v}^t=\frac{1}{|S_p|}$, $\beta_{v,u}^t=0$, $(u,v)\in E(T)$, $t\in S_p$, $\mu_v=0$, $v\in V(T)$, and $\eta_v=1$, $v\in V_{\text{ext}}$.
\item Dual step: consider a violated dual constraint for $v\in V(T)$ of form
\CorrB{\begin{align*}
\mu_v+\eta_v+\sum_{(u,v)\in E(T)}\left(\beta_{u,v}^s-\beta_{v,u}^s\right)&\leq 0~&~&\forall\,v\in V_{\text{ext}},s=C(v)
\end{align*}
or
\begin{align*}
\mu_v+\sum_{(v,w)\in E(T)}\left(\beta_{w,v}^s-\beta_{v,w}^s\right)+\sum_{(u,v)\in E(T)}\left(\beta_{u,v}^s-\beta_{v,u}^s\right)&\leq 0~&~&\forall\,v\in V_{\text{int}}^t,s\in S_p.
\end{align*}}In the first case, for $(u,v)\in E(T)$, set $\beta_{v,u}^{C(v)}=1$, $\beta_{v,u}^t=0$ for all $t\in S_p\setminus\{C(v)\}$ and $\beta_{u,v}^t=0$ for all $t\in S_p$. In the latter case, for $(u,v),(v,w_1),(v,w_2)\in E(T)$, \CorrB{if}
\begin{enumerate}[(i)]
\item $C'(w_1)\cap C'(w_2)\neq\emptyset$: set $C'(v)=C'(w_1)\cap C'(w_2)$, $\beta_{v,u}^s=\frac{1}{|C'(v)|}$ for all $s\in C'(v)$, $\beta_{v,u}^s=0$ for all $s\in S_p\setminus C'(v)$, $\beta_{u,v}^t=0$ for all $t\in S_p$, and, if $v$ is not the root, $\mu_v=-\frac{1}{|C'(v)|}$. If $v$ is the root, then $\mu_v=-\frac{2}{|C'(v)|}$. Thereafter, set $C'(w_1)=C'(w_2)=C'(v)$.
\item $C'(w_1)\cap C'(w_2)=\emptyset$: set $C'(v)=C'(w_1)\cup C'(w_2)$, $\beta_{v,u}^s=\frac{1}{|C'(v)|}$, $s\in C'(v)$, $\beta_{v,u}^s=0$, $s\in S_p\setminus C'(v)$, $\beta_{u,v}^t=0$, $t\in S_p$, and, if $v$ is not the root, $\mu_v=\frac{1}{|C'(v)|}-\frac{1}{|C'(w_1)|}$. If $v$ is the root, then $\mu_v=-\frac{1}{|C'(w_1)|}$.
\end{enumerate}
\item Primal step: set $x_v^s=1$, $s\in C'(v)$, $c_{(u,v)}=|C'(u)\setminus C'(v)|$ and, for $i=1,2$, $c_{(v,w_i)}=|C'(v)\setminus C'(w_i)|$. For $i=1,2$, if $C'(w_i)$ was changed in the latest dual step, then recursively remove character states of the children of $C'(w_i)$ which are not in $C'(w_i)$ until no more character states can be removed or the recursion arrives at a singleton character state set.
\end{enumerate}

\begin{algorithm}[!t]
 \KwIn{A set of taxa $\Gamma=\{1,2,\dots,n\}$, a rooted, binary phylogenetic tree $T$ of $\Gamma$; a character $C:\Gamma\to S_p$; $s\in S_p$ such that $\left|\{v\in\Gamma\,:\,C(v)=s\}\right|$ is maximum}
 \KwOut{A character $C'$ of $V(T)$; minimum score$(T,C')$}
 Choose an initial feasible primal solution\;
 Choose an initial infeasible dual solution\;
 \While{There exists a violated dual constraint}{
 	Apply a dual step for $v\in V(T)$ such that there exist no violated dual constraints associated with the children of $v$. This results in a propagation of character states to $v$ from its children via a set intersection or union\;
	Apply a primal step to establish local optimality of dual and primal variables in vertex $v$\;
 }
 Fix $C'(\rho)$ to be a singleton\;
 Apply a primal step\;
 \Return{$(C',\sum_{e\in E}c_e)$}
 \caption{A Primal-Dual Algorithm for BTPS}
 \label{PD-HPS-trees}
\end{algorithm}

\begin{prop}\label{prop::treeOpt}
The BTPS can be solved in polynomial time by Algorithm~\ref{PD-HPS-trees}.
\end{prop}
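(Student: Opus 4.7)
The plan is to prove correctness via a primal-dual analysis: I would exhibit, at termination of Algorithm~\ref{PD-HPS-trees}, a feasible LP-relaxation primal and a feasible dual that together satisfy every complementary slackness condition \eqref{cs::dv1}--\eqref{cs::pv4}, and then argue that the final rounding to singleton character state sets preserves the objective. Throughout the execution I would identify the ``fractional primal'' with $x_v^s = \mathbf{1}[s \in C'(v)] / |C'(v)|$ (and $c_{(u,v)}$ tight against the primal inequalities), and maintain the invariant that after a vertex has been processed, all CS conditions involving that vertex and its incoming edge are satisfied. The dual step at $v$ fires exactly when the only remaining violated constraints lie above $v$, so processing occurs in a bottom-up order that is always available in a tree.

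The inductive core reduces directly to Proposition~\ref{prop::dualstep} and Corollary~\ref{prop::dualstep::root}. For a leaf $v$, setting $\beta_{v,u}^{C(v)}=1$, $\mu_v=0$, $\eta_v=1$ is exactly case~2 of Proposition~\ref{prop::dualstep}. For an internal vertex in the intersection branch, the algorithm's subsequent reset $C'(w_1)=C'(w_2)=C'(v)$ and the downward refinement of the primal step bring the state into the regime of Proposition~\ref{prop::dualstep}.1(ii) (or 1(i) if the common set has become a singleton), so the prescribed values of $\beta_{v,u}^s$ and $\mu_v$ satisfy CS. For the union branch, Proposition~\ref{prop::dualstep}.4(ii) supplies the same conclusion. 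At the root, Corollary~\ref{prop::dualstep::root} accounts for the absence of an incoming edge by adjusting $\mu_\rho$ by $-1/|C'(\rho)|$, which matches the algorithm's ``if $v$ is the root'' clauses.

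The main obstacle is verifying the hypothesis $|C'(w_1)|=|C'(w_2)|$ required by Proposition~\ref{prop::dualstep}.4(ii) at every invocation of the union branch. I would establish this as an auxiliary sibling-cardinality invariant: at the moment $v$ is about to be processed, $|C'(w_1)|=|C'(w_2)|$. The base case is immediate at leaves; the inductive step exploits that the intersection branch re-equalises children via the reset together with the recursive refinement in the primal step, while the union branch creates a set whose cardinality is determined by equal-sized disjoint contributions from the two subtrees below. A level-by-level bookkeeping argument on the depths of intersection/union events in the two subtrees rooted at $w_1$ and $w_2$ is what I expect to carry this through; if the invariant turned out to fail on a pathological instance, the fallback is to argue that Algorithm~\ref{PD-HPS-trees} produces exactly the character state assignment computed by Fitch's Algorithm~\ref{algo::Fitch} and invoke Proposition~\ref{prop::fitch2} for the IP-optimality, while still using the primal-dual certificate to exhibit a matching LP lower bound.

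Once CS and feasibility are in place, LP duality yields $\sum_{e\in E(T)} c_e = \sum_{v\in V(T)} \mu_v + \sum_{v\in V_{\text{ext}}} \eta_v$, so the fractional primal is LP-optimal. The final step fixes $C'(\rho)$ to a singleton $s \in C'(\rho)$ and propagates it downward, setting each child's set to $\{s\}$ whenever $s \in C'(v)$ and to an arbitrary singleton of $C'(v)$ otherwise. I would check edge by edge that $c_{(u,v)}$ is unchanged: in the intersection case $s \in C'(v)$ so $c_{(u,v)}=0$ remains; in the union case the chosen singleton at $v$ matches exactly one child, so the total contribution of the two child-edges stays at $1$, agreeing with the fractional value. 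Hence the rounded integer primal has the same objective as the LP, which lower-bounds the IP optimum, so it is optimal for the BTPS. Polynomial running time is immediate since each vertex is touched $O(1)$ times with $O(|S_p|)$ local work, giving $O(|V(T)|\cdot|S_p|)$ overall.
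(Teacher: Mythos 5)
Your overall plan coincides with the paper's: maintain the fractional primal $x_v^s=\mathbf{1}[s\in C'(v)]/|C'(v)|$, certify each processed vertex through Proposition~\ref{prop::dualstep} and Corollary~\ref{prop::dualstep::root}, and conclude via complementary slackness plus the final rounding. However, the step you single out as the crux --- the ``sibling-cardinality invariant'' $|C'(w_1)|=|C'(w_2)|$ at every invocation of the union branch --- is false, and this is a genuine gap. Take $\Gamma=\{1,2,3\}$, the tree $((1,2),3)$ and the character $C(1)=1$, $C(2)=2$, $C(3)=0$: the parent of the cherry receives the set $\{1,2\}$, so at the root the union branch fires with children of cardinalities $2$ and $1$. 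More generally, any vertex with a leaf on one side and, on the other side, a subtree whose Fitch set is non-singleton violates the invariant, so no level-by-level bookkeeping of union/intersection events can rescue it; the hypothesis of Proposition~\ref{prop::dualstep}.4(ii) is simply not available during the bottom-up pass. The paper's proof does not assert this invariant; instead it concedes that when the recursive refinement in the primal step alters $C'(w_i)$, a vertex falling under Proposition~\ref{prop::dualstep}.4 may temporarily fail local optimality precisely because of the condition $|C'(w_1)|=|C'(w_2)|$, and it argues that this condition becomes redundant once the final step (fixing $C'(\rho)$ to a singleton and propagating singletons downward) has been performed, after which the dual variables at the affected vertices can be re-chosen as in Proposition~\ref{prop::dualstep} so that all of \eqref{cs::dv1}--\eqref{cs::pv4} hold at termination.

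Your fallback --- showing that Algorithm~\ref{PD-HPS-trees} reproduces the assignment of Algorithm~\ref{algo::Fitch} and invoking Proposition~\ref{prop::fitch2} --- would suffice for the bare statement, but you do not carry it out, and the accompanying claim that you could ``still use the primal-dual certificate to exhibit a matching LP lower bound'' is exactly what breaks where the invariant fails, since at those vertices your dual values are not known to be feasible or tight. The remaining ingredients of your write-up (leaf case, intersection case, root correction, the edge-by-edge check that the final rounding preserves the objective, and the $O(|V(T)|\cdot|S_p|)$ running time) are consistent with the paper, but the certificate argument needs to be repaired along the lines above before the duality conclusion can be drawn.
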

\begin{proof}
First, observe that the dual step is applied to an internal vertex $v$ if and only if, the children $w_1$ and $w_2$ have been processed by a previous dual step. After one dual step for some $v\in V(T)$ no additional dual constraint is violated. Moreover, for vertex $v$, all constraints of the form~\eqref{con1::mu1} or all constraints of form~\eqref{con1::mu2} and~\eqref{con1::mu+eta} are no longer violated. 

Now, observe that after the application of dual step (i) we satisfy the complementary slackness conditions for vertex $v$ locally due to Proposition~\ref{prop::dualstep}.1. Whenever a change of sets $C'(w_1)$ and $C'(w_2)$ in dual step (i) occurs, we have to argue that the recursion in the subsequent primal step does not prohibit global optimality. Indeed, for $i=1,2$, if $w_i$ fits into Proposition~\ref{prop::dualstep}.1 as vertex $v$ by changing the character state sets of the children of $w_i$, then $w_i$ can be locally optimal for an appropriate change of variables as described in Proposition~\ref{prop::dualstep}.1. However, applying the same changes to $w_i$ fitting into Proposition~\ref{prop::dualstep}.4 might require a change of variables not captured by Proposition~\ref{prop::dualstep} and character state sets to maintain local optimality because of the condition $|C'(w_1)|=|C'(w_2)|$. Since this requirement becomes redundant for singleton character state sets, we can conclude again that after the termination of Algorithm~\ref{PD-HPS-trees} we can maintain local optimality for $w_i$ by an appropriate change of variables as described in Proposition~\ref{prop::dualstep}.4 (again $w_i$ is seen as vertex $v$ in the application of the proposition). With analogous arguments we can deduce that dual step (ii) does not prohibit global optimality either. Thus, in total we conclude that by setting $C'(\rho)$ as a singleton Algorithm~\ref{PD-HPS-trees} constructs a feasible primal and feasible dual solution which satisfy all complementary slackness conditions~\ref{cs::dv1} to~\ref{cs::pv4}.
\end{proof}

In the next section we expand our analysis of the SPS from rooted, binary phylogenetic trees to the class $\mathcal{N}$ of rooted, binary, tree-child phylogenetic networks.

\section{An approximation algorithm for the SPS}\label{sec4}
The only polynomial-time approximation algorithm known for the SPS in the literature is the following~\citep{fischer15}: let $\Gamma =\{1,\dots,n\}$ be a set of taxa, let $N$ be a rooted phylogenetic network of $\Gamma$ and let $C$ be a character of $\Gamma$. Let $s$ be the most frequently occurring state in $C$. Clearly, $s$ occurs on at least a $1/p$ fraction of the taxa. Let $T\in\mathcal{T}(N)$ and let $C'$ be an extension of $C$ to $V(N)$ defined by
\begin{align*}
C'(x)=\begin{cases}C(x)&\text{if $x\in\Gamma$},\\
s&\text{otherwise.}
\end{cases}
\end{align*}
Let OPT denote the minimum softwired parsimony score. Clearly, OPT$\,\geq p-1$. Hence, for $p\geq 2$,
\begin{align*}
\frac{\text{score}(T,C')}{\text{OPT}}\leq\frac{(1-1/p)n}{p-1}=\frac{n}{p}\leq\frac{n}{2},
\end{align*}
i.e., we obtain a linear approximation for the SPS. We call this approximation for the SPS the \emph{simple approximation} for the SPS. Given the very strong inapproximability results from~\citep{fischer15} this is (up to constant factors) best-possible, \CorrB{unless P $=$ NP}, on many classes of networks. However, in this section we will expand on our analysis from the last section to give a polynomial-time constant-factor approximation for the SPS restricted to phylogenetic networks from the class $\mathcal{N}$. To this end, we expand our analysis of rooted, binary phylogenetic trees from the last section to rooted, tree-child, binary phylogenetic networks. Hence, we consider the complete IP formulation for the SPS introduced by~\citet{fischer15}: due to the presence of reticulation vertices consider binary decision variables $y_e$ for all $e\in E_{\text{int}}^r$, such that
\begin{align*}
y_e=\begin{cases}1&\text{if edge $e$ appears in a feasible solution to the SPS},\\
0&\text{otherwise},
\end{cases}
\end{align*}
to obtain the following IP:
\begin{form}\label{form::sps}
\begin{align*}
\min~~\sum_{e\in E(T)}c_e~~~~~~~~~&\\
\text{s.t.}\,~~~~~~c_e-x_u^s+x_v^s&\geq 0~&~&\forall\,e=(u,v)\in E_{\text{int}}^t\cup E_{\text{ext}},s\in S_p\\
c_e+x_u^s-x_v^s&\geq 0~&~&\forall\,e=(u,v)\in E_{\text{int}}^t\cup E_{\text{ext}},s\in S_p\\
c_e-y_e-x_u^s+x_v^s&\geq -1~&~&\forall\,e=(u,v)\in E_{\text{int}}^r,s\in S_p\\
c_e-y_e+x_u^s-x_v^s&\geq -1~&~&\forall\,e=(u,v)\in E_{\text{int}}^r,s\in S_p\\
\sum_{s=0}^px_v^s&= 1~&~&\forall\,v\in V(T)\\
x_v^{C(v)}&=1~&~&\forall\,v\in V_{\text{ext}}\\
\sum_{(u,v)\in E_{\text{int}}^r}y_{(u,v)}&= 1~&~&\forall\,v\in V_{\text{int}}^r\\
c_e&\in\{0,1\}~&~&\forall\,e\in E(T)\\
y_e&\in\{0,1\}~&~&\forall\,e\in E_{\text{int}}^r\\
x_v^s&\in\{0,1\}~&~&\forall\,v\in V(T),s\in S_p
\end{align*}
\end{form}
Notice that Formulations~\ref{form::btps} and~\ref{form::sps} differ only in constraints encoding the presence/absence of reticulation edges. Analogously to the BTPS, we look at the dual of the LP relaxation of Formulation~\ref{form::sps}:
\begin{form}\label{form::spsDual}
\begin{align}
\max~~\sum_{v\in V(T)}\mu_v+\sum_{v\in V_{\text{ext}}}\eta_v+\sum_{(u,v)\in E_{\text{int}}^r}\left[\delta_v-\sum_{s=0}^p\left(\beta_{u,v}^s+\beta_{v,u}^s\right)\right]&\\
\text{s.t.}~~~~~~~~~~~~~~~~~~~~~~~~~~~~~~~~~~~~~~~~~~~~~~~~~~~~~~~~~~~~~~~~~~\sum_{s=0}^p\left(\beta_{u,v}^s+\beta_{v,u}^s\right)&\leq 1~&~&\forall\,(u,v)\in E_{\text{int}}^t\cup E_{\text{ext}}\\
\delta_v-\sum_{s=0}^p\left(\beta_{u,v}^s+\beta_{v,u}^s\right)&\leq 0~&~&\forall\,(u,v)\in E_{\text{int}}^r\\
\mu_v+\sum_{(v,w)\in E(T)}\left(\beta_{w,v}^s-\beta_{v,w}^s\right)+\sum_{(u,v)\in E(T)}\left(\beta_{u,v}^s-\beta_{v,u}^s\right)&\leq 0~&~&\forall\,v\in V_{\text{int}}^t\cup V_{\text{int}}^r,s\in S_p\label{con2::mu1}\\
\mu_v+\sum_{(u,v)\in E(T)}\left(\beta_{u,v}^s-\beta_{v,u}^s\right)&\leq 0~&~&\forall\,v\in V_{\text{ext}},s\in S_p\setminus\{C(v)\}\label{con2::mu2}\\
\mu_v+\eta_v+\sum_{(u,v)\in E(T)}\left(\beta_{u,v}^s-\beta_{v,u}^s\right)&\leq 0~&~&\forall\,v\in V_{\text{ext}},s=C(v)\label{con2::mu+eta}\\
\beta_{u,v}^s,\beta_{v,u}^s&\geq 0~&~&\forall\,(u,v)\in E(T),s\in S_p\\
\mu_v&\in\mathbb{R}~&~&\forall\,v\in V(T)\\
\eta_v&\in\mathbb{R}~&~&\forall\,v\in V_{\text{ext}}\\
\delta_v&\in\mathbb{R}~&~&\forall\,v\in V_{\text{int}}^r
\end{align}
\end{form}
We can simplify Formulation~\ref{form::spsDual} by exploiting the structure of rooted, binary phylogenetic networks like we did for trees and Formulation~\ref{form::btpsDual}. This means, for $v\in V_{\text{int}}^t$, $s\in S_p$, constraint~\eqref{con2::mu1} is of form~\eqref{con::bbb} and, for $v\in V_{\text{ext}}$, $s\in S_p$, constraints~\eqref{con2::mu2} and~\eqref{con2::mu+eta} are of form~\eqref{con::00b} and~\eqref{con::e00b}, respectively. Moreover, for $v\in V_{\text{int}}^r$, $s\in S_p$ with $(u_1,v),(u_2,v),(v,w)\in E(T)$, constraint~\eqref{con2::mu1} is of the form
\begin{align*}
\mu_v+\beta_{w,v}^s-\beta_{v,w}^s+\beta_{u_1,v}^s-\beta_{v,u_1}^s+\beta_{u_2,v}^s-\beta_{v,u_2}^s\leq 0.
\end{align*}
Similarly to our analysis in the last section, we first introduce all components that we need to construct an approximation algorithm for the SPS. Again, we make frequent use of complementary slackness conditions. In addition to conditions~\eqref{cs::dv1} to~\eqref{cs::pv4}, the following complementary slackness conditions can be associated with LP relaxation of Formulation~\ref{form::sps} and Formulation~\ref{form::spsDual}:
\begin{align}
\left(c_e-y_e-x_u^s+x_v^s+1\right)\beta_{u,v}^s&=0~&~&\forall\,e=(u,v)\in E_{\text{int}}^r,s\in S_p\label{cs::dv5}\\
\left(c_e-y_e+x_u^s-x_v^s+1\right)\beta_{v,u}^s&=0~&~&\forall\,e=(u,v)\in E_{\text{int}}^r,s\in S_p\label{cs::dv6}\\
\left(\sum_{(u,v)\in E_{\text{int}}^r}y_{(u,v)}-1\right)\delta_v&=0~&~&\forall\,v\in V_{\text{int}}^r\label{cs::dv7}\\
\left[\delta_v-\sum_{s=0}^p\left(\beta_{u,v}^s+\beta_{v,u}^s\right)\right]c_e&=0~&~&\forall\,e\in E_{\text{int}}^r\label{cs::pv5}\\
\left(\mu_v+\beta_{w,v}^s-\beta_{v,w}^s+\beta_{u_1,v}^s-\beta_{v,u_1}^s+\beta_{u_2,v}^s-\beta_{v,u_2}^s\right)x_v^s&=0~&~&\forall\,v\in V_{\text{int}}^r,s\in S_p\label{cs::pv6}
\end{align}

Recall from Observation~\ref{obs::triangle} that without loss of generality all phylogenetic networks in $\mathcal{N}$ are triangle-free. Hence, the graph $N\in\mathcal{N}$ looks locally around a reticulation vertex like the graph on the left in Figure~\ref{ret1}. This clear local structure allows us to extend Algorithm~\ref{PD-HPS-trees} to obtain an 2-approximation algorithm for the SPS for some phylogenetic networks:

\begin{prop}\label{prop::2approx}
Let $N\in\mathcal{N}$ \CorrB{(i.e. $N$ is a rooted, binary, tree-child network)} and let $C$ be a character of $\Gamma$. Then, the SPS \CorrB{instance formed by $\Gamma$, $N$ and $C$} can be approximated with factor 2 \CorrB{in polynomial time}.
\end{prop}
\begin{proof}
We can construct a solution to Formulations~\ref{form::sps} and~\ref{form::spsDual} by setting primal and dual variables to the same values as Algorithm~\ref{PD-HPS-trees} does when solving the SPS on any fixed induced phylogenetic subtree of $N$. This solution can be extended to a feasible primal and infeasible dual solution $F$ to the SPS for $N$ and $C$. Then, we can continue to apply Algorithm~\ref{PD-HPS-trees} until it is no longer possible to find an induced phylogenetic subtree of $N$ without fully resolved primal and dual variables in the resulting feasible solution $F$ to the SPS for $N$ and $C$. We call the state of all fully resolved primal and dual variables in $F$ \emph{optimal} because from Proposition~\ref{prop::treeOpt} we know that score$(T,C')$ is minimum when $T$ is an induced phylogenetic subtree of $N$ and $C'$ is the corresponding extension of character $C$ returned by Algorithm~\ref{PD-HPS-trees}.

First, assume $N$ is time-consistent. Recall from Observation~\ref{obs::time} that there exists some fixed order of $V(N)$ such that a propagation of character state sets from the children of vertices $v\in V(N)$ to $v$ in the order of $V(N)$ is well-defined. Then, there exists at least one reticulation vertex $v_r\in V_{\text{int}}^r$ with parents $v_1$ and $v_2$ such that, for $(v_r,w_r)$, $(v_1,w_1)$, $(v_2,w_2)\in E_{\text{int}}^t$, primal and dual variables associated with all vertices and edges present in the subtrees rooted in $w_r,w_1$ and $w_2$ are optimal (see Figure~\ref{ret1}).
In this proof we always choose $x_v^s=1/|C'(v)|$ for $v\in V(T)$, $s\in S_p$, $c_{(u,v)}=|C'(u)\setminus C'(v)|$, $(u,v)\in E(N)$, $\beta_{v,u}^s=0$ for $(u,v)\in E(N)$, $s\in S_p$, and $\beta_{u,v}^s=0$ for $y_{(u,v)}=0$, $(u,v)\in E_{\text{int}}^r$, $s\in S_p$. Define $N'$ as the subnetwork we obtain from $N$ by taking the union of the induced phylogenetic subnetworks rooted in $v_1$ and $v_2$.

\begin{description}
\item[Case 1:] $C'(w_r)=C'(w_1)=C'(w_2)$. Set $C'(v_1)=C'(v_2)=C'(v_r)=C'(w_r)$. Then, we set feasible values for dual variables $\beta_{v_1,w_1}^s,\beta_{w_1,v_1}^s$, $\beta_{v_2,w_2}^s,\beta_{w_2,v_2}^s$, $\beta_{v_r,w_r}^s,\beta_{w_r,v_r}^s$, $s\in S_p$ which satisfy complementary slackness conditions (see Proposition~\ref{prop::dualstep}). In addition, set $y_{(v_1,v_r)}=\delta_{v_r}=1$, $y_{(v_2,v_r)}=0$, $\beta_{v_1,v_r}^s=1/|C'(w_r)|$, $s\in C'(w_r)$, and $\beta_{v_1,v_r}^s=0$, $s\in S_p\setminus C'(w_r)$. Then, complementary slackness conditions~\eqref{cs::dv5} to~\eqref{cs::pv5} hold for edges $(v_1,v_r),(v_2,v_r)$ and vertex $v_r$. Furthermore, for vertex $v_r$, conditions~\eqref{cs::pv6} are equivalent to
\begin{align*}
\mu_{v_r}+0-\frac{1}{|C'(w_r)|}+\frac{1}{|C'(w_r)|}-0+0-0&=0~&~&\forall\,s\in C'(v_r),\\
\mu_{v_r}+0-0+0-0+0-0&\leq 0~&~&\forall\,s\in S_p\setminus C'(v_r).
\end{align*}
Hence, set $\mu_{v_r}=0$. Additionally, set $\mu_{v_1}=-1/|C'(w_1)|-1/|C'(w_r)|$ and $\mu_{v_2}=-1/|C'(w_2)|$. Thus, from the fact that conditions~\eqref{cs::pv2} hold for roots $v_1$, $v_2$, $s\in S_p$, we conclude that score$(N',C')$ is minimum.
\item[Case 2:] $C'(w_r)=C'(w_1)\neq C'(w_2)$. Set $C'(v_1)=C'(v_r)=C'(w_r)$ and $C'(v_2)=C'(w_2)$. Then, we can set primal and dual variables like in Case~1 to arrive at the same conclusion.
\item[Case 3:] $C'(w_r)\neq C'(w_1)$.
\begin{description}
\item[Case 3.1:] $C'(w_r)\subset C'(w_1)$. Then, apply Case~1 or~2 and recursively remove character states of the children of $C'(v_1)$ and $C'(v_2)$ which are not in $C'(w_r)$ until no more character states can be removed or the recursion arrives at a singleton character state set.
\item[Case 3.2:] $C'(w_r)\nsubseteq C'(w_1)$ and $C'(w_1)\subset C'(w_r)$. Then, apply Case~1 or~2 for $C'(v_1)=C'(v_2)=C'(v_r)=C'(w_1)$ or $C'(v_1)=C'(v_r)=C'(w_1)$, $C'(v_2)=C'(w_2)$, respectively. Then, recursively remove character states of the children of $C'(v_1)$ which are not in $C'(w_1)$ until no more character states can be removed or the recursion arrives at a singleton character state set.
\item[Case 3.3:] $C'(w_r)\nsubseteq C'(w_1)$ and $C'(w_1)\nsubseteq C'(w_r)$. Then, apply Case~1 or~2 for $C'(v_1)=C'(v_2)=C'(v_r)=C'(w_1)\cup C'(w_r)$ or $C'(v_1)=C'(v_r)=C'(w_1)\cup C'(w_r)$, $C'(v_2)=C'(w_2)\cup C'(w_r)$, respectively, except for the definition of $\mu_{v_1}$ and $\mu_{v_2}$. This means, $C'(w_1)$ and $C'(w_2)$ are proper subsets of $C'(v_1)$ and $C'(v_2)$, respectively. Since $C'(w_1)\subset C'(v_1)=C'(v_r)$, i.e., for $(u,v_1)\in E(N)$, conditions~\eqref{cs::pv2} include equations
\begin{align*}
\mu_{v_1}+\beta_{w_1,v_1}^s-0+\beta_{v_r,v_1}^s-0+0-\beta_{v_1,u}^s&=0~&~&\forall\,s\in C'(w_1),\\
\mu_{v_1}+0-0+\beta_{v_r,v_1}^s-0+0-\beta_{v_1,u}^s&=0~&~&\forall\,s\in C'(v_1)\setminus C'(w_1).
\end{align*}
Hence, there exists no $\mu_{v_1}$ to complete the complementary slackness conditions for our previously defined dual variables. An analogous argument holds for $\mu_{v_2}$.
\end{description}
\end{description}

If Case~3.3 did not occur, then we know from Proposition~\ref{prop::dualstep} that all primal and dual variables we have defined so far satisfy their respective complementary slackness conditions. Otherwise, we choose $\mu_{v_1}$ and $\mu_{v_2}$ maximum, i.e., our constructed solution can be extended to a feasible solution for the SPS. We call the state of all fully resolved primal and dual variables so far optimal again even if Case~3.3 prevents some complementary slackness conditions to be fulfilled.

After we have finished processing the reticulation vertex $v_r$ as described above, either there exists another reticulation vertex that satisfies the same assumptions as our choice of $v_r$ did or there exist vertices $v_1,v_2$ and $v_r$ like in Figure~\ref{ret1} and directed paths $P_1,P_2$ and $P_r$ starting in $v_1,v_2$ and $v_r$ and ending in vertices $v_1',v_2'$ and $v_r'$, whose associated primal and dual variables are optimal in $N[v_1'],N[v_2'] $ and $N[v_r']$, respectively. In the latter case, we make our new choice for a reticulation vertex $v_r$ such that all vertices in $P_1,P_2$ and $P_r$ are adjacent to vertices $p$ whose associated primal and dual variables are optimal in $N[p]$. This choice is always possible because $N$ is time-consistent. Hence, we can apply dual and primal steps like in Algorithm~\ref{PD-HPS-trees} to vertices in $P_1,P_2$ and $P_r$ non-increasing a time-stamp function of $N$ to establish local optimality for primal and dual variables on these paths. Thus, we can repeat the procedure for reticulation vertices we have outlined so far to process all vertices of $N$ in a well-defined propagation sequence of character states from the leaves to character state sets on internal vertices. Finally, we finish our computations by fixing $C'(\rho)$ to be a singleton and applying another primal step. 

If Case~3.3 never occurred, then all complementary slackness conditions of Formulations~\ref{form::sps} and~\ref{form::spsDual} are satisfied for our constructed feasible solution. Thus, our construction yields an optimal solution to the SPS. Otherwise, we know from Proposition~\ref{prop::dualstep}.4.i that our definition of primal and dual variable still do not satisfy complementary slackness. In the worst case $C'(w_1)\neq C'(v_1)$ and $C'(w_2)\neq C'(v_2)$ for each set of parents $v_1,v_2$ of a reticulation vertex $v_r$ for which we applied Case~3.3 because this leads to the maximum contribution of $2$ to the objective function value of our solution. Without loss of generality the best case occurs for $C'(w_1)\neq C'(v_1)$ and $C'(w_2)=C'(v_2)$ because $C'(w_r)\neq C'(w_1)$. This means, the contribution of vertices $v_1,v_2$ and $v_r$ to the objective function value is at most twice this contribution in an optimal solution to the SPS.

\begin{figure*}[pos=!b,align=\centering]
\centering
\includegraphics[scale=0.4]{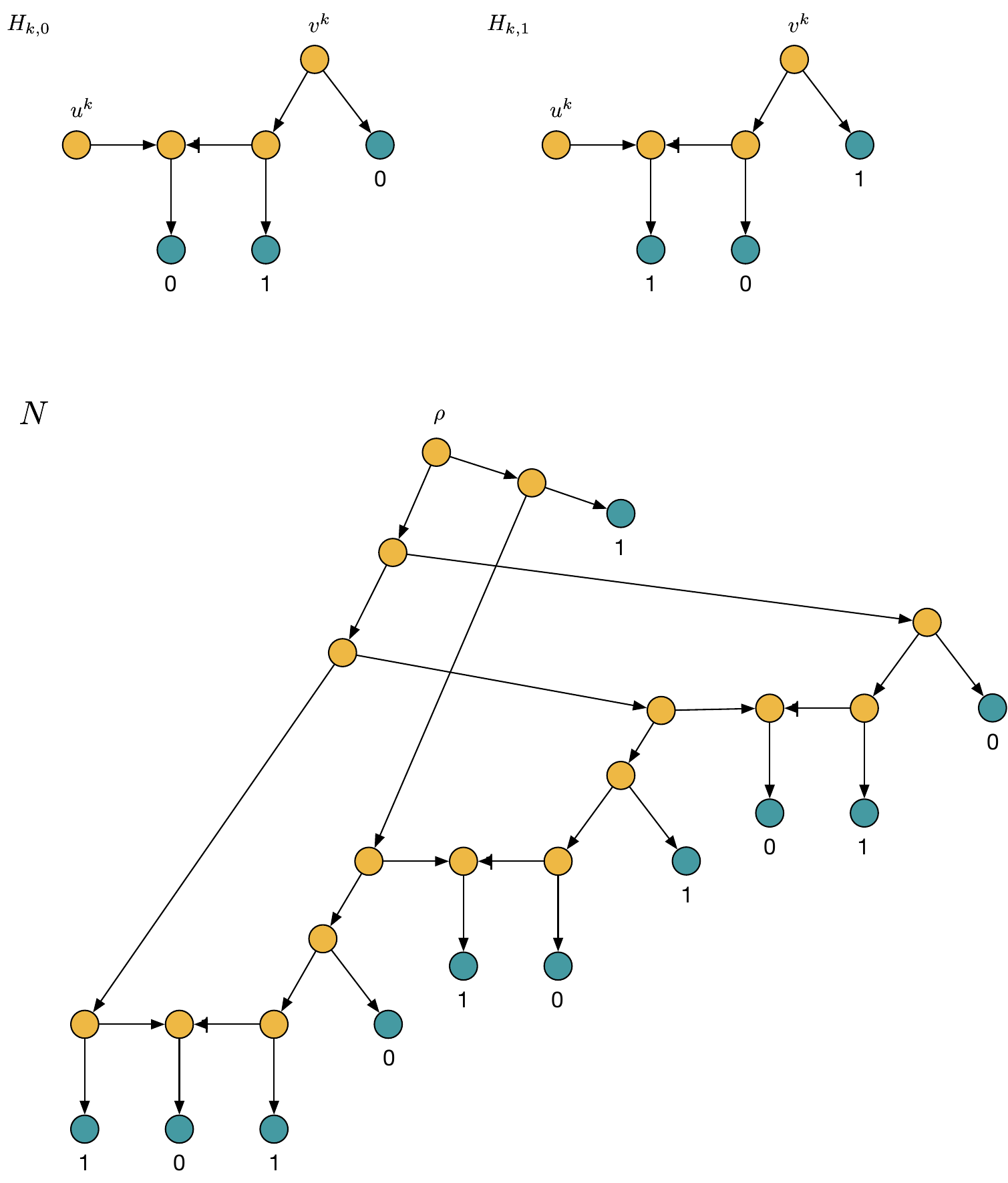}
\caption{Two directed graphs $H_{k,0}$ and $H_{k,1}$ with partially labelled vertices and, for $k=3$, a phylogenetic network $N$ on $3k+2$ taxa with root $\rho$ such that $H_{1,0},H_{2,1}$ and $H_{3,0}$ are induced subgraphs of $N$ when suppressing labels of internal vertices.}\label{fig::ta1}
\end{figure*}

Now, if $N$ is not time-consistent, then we can apply the same propagation sequence of character state sets to vertices $v\in V(N)$ as in the case that $N$ is time-consistent until we encounter a reticulation vertex $v_r$ with $(v,v_r)\in E_{\text{int}}^r$ for which our arguments do apply anymore. Then, there exists a vertex $v'\in V(N)$ with a directed path from $v$ to $v'$ in $N$ such that $v'$ has a child which is reticulation vertex $v_r'$ and $v_r'$ is unprocessed by our propagation algorithm so far. Such vertices $v,v'$ exist because $N$ is tree-child. If we can find this vertex $v'$ for both parents $v$ of $v_r$, then we reapply our arguments for one choice of $v'$ instead of $v$ until exactly one parent of a reticulation vertex fits into the definition of vertex $v$. Again, reapplying our arguments to identify suitable vertices $v$ and $v'$ is possible because $N$ is tree-child. Then, we choose $v_2=v$ in our procedure above to process $v_r$ while keeping $C'(v_2)$ undefined. This allows us to process a reticulation vertex preventing $N$ from being time-consistent by keeping one of it's parents unprocessed without incurring any additional cost different from the additional cost of Case~3.3. Thus, we reach the same conclusion as for a time-consistent network $N\in\mathcal{N}$.
\end{proof}

Observe that the quality of approximation in Proposition~\ref{prop::2approx} solely relies on the occurence of Case~3.3. Hence, we can improve the approximation by avoiding Case~3.3 whenever possible (for example by swapping the role of $w_1$ and $w_2$ in the proof of Proposition~\ref{prop::2approx}). Furthermore, Proposition~\ref{prop::dualstep}.3 indicates that alternative constructions to the one chosen in Case~3.3 lead to the same problems with local non-optimality. Moreover, the approximation factor of $2$ is tight for the class $\mathcal{N}$:

\begin{prop}
There exists an integer $n\geq 5$, a network $N\in\mathcal{N}$ and a character $C$ of $\Gamma$ such that the algorithm detailed in Proposition~\ref{prop::2approx} yields $T\in\mathcal{T}(N)$ and an extension $C'$ of $C$ from $\Gamma$ to $V(N)$ with score$(T,C')=2\,\text{OPT}$.
\end{prop}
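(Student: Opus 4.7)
The plan is to exhibit an explicit family of instances based on the gadgets $H_{k,0}$ and $H_{k,1}$ shown in Figure~\ref{fig::ta1} and verify tightness by a direct comparison with an easily-identified optimal switching. Concretely, I would work with the network $N$ drawn in that figure (for some fixed $k\geq 2$, e.g.\ $k=3$), on $3k+2$ taxa, together with a character $C:\Gamma\to\{0,1,2\}$ that assigns the three character states to the labelled leaves in the pattern indicated by the figure. The point of the gadgets is that each successive $H_{\cdot,\cdot}$ contains one reticulation vertex $v_r$ whose two parents $v_1,v_2$ sit over subtrees whose character-state sets, when propagated by Algorithm~\ref{PD-HPS-trees}, have been forced into incomparable singletons; in other words each gadget instantiates precisely Case~3.3 of the proof of Proposition~\ref{prop::2approx}.

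First I would describe the gadgets formally (vertex names, reticulation edges, which taxa are attached), making explicit the fact that $N\in\mathcal{N}$ and is triangle-free. Then I would specify $C$ on the $3k+2$ taxa so that (i)~within each gadget the two ``top'' leaves receive distinct states from $\{0,1,2\}$ while the bottom leaf of the reticulation receives a third, and (ii)~the pattern is arranged along the spine of $N$ so that when character-state sets are propagated upwards by the algorithm we always reach the configuration $C'(w_r)\nsubseteq C'(w_1)$ and $C'(w_1)\nsubseteq C'(w_r)$ described in Case~3.3.

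Next I would trace Algorithm~\ref{PD-HPS-trees}, as extended in the proof of Proposition~\ref{prop::2approx}, step by step from the leaves up. At each reticulation $v_r$ the hypotheses of Case~3.3 are met, so the algorithm is forced to set $C'(v_1)=C'(v_r)=C'(w_1)\cup C'(w_r)$ and $C'(v_2)=C'(w_2)\cup C'(w_r)$, and by the ``worst case'' observation at the end of the proof of Proposition~\ref{prop::2approx} each such reticulation contributes exactly $2$ to $\mathrm{score}(T,C')$, yielding a total algorithmic score of $2k$ (plus a bounded constant coming from the base of the network, independent of $k$). I would then exhibit an explicit switching $S\in\mathcal{S}(N)$, obtained by systematically deleting the ``other'' reticulation edge at each $v_r$, whose displayed tree $T^*$ together with a singleton extension $C^{*}$ achieves $\mathrm{score}(T^*,C^*)=k$ (again up to the same bounded constant), and argue by Fitch's algorithm on $T^*$ (Proposition~\ref{prop::fitch2}) that this is in fact optimal for that choice of switching, and a simple counting lower bound (one unit per layer, coming from the three states present in each gadget) shows no switching of $N$ does better. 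Dividing gives a ratio tending to $2$ as $k\to\infty$, which together with the upper bound from Proposition~\ref{prop::2approx} yields tightness.

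The main obstacle, and where the argument requires the most care, is making the trace of the algorithm rigorous: one must show that, regardless of any tie-breaking freedom in choosing the order in which reticulations are processed or in the labelling of parents as $v_1,v_2$, Case~3.3 cannot be avoided. I expect to resolve this by designing the gadgets so symmetrically that every allowable processing order encounters the same incomparable character-state sets at each reticulation, so that the two-for-one cost at every gadget is unavoidable. The matching lower bound on $\mathrm{OPT}$ is the easier part, since in every displayed tree each gadget contains three leaves carrying three distinct states and hence must incur at least one mutation locally.
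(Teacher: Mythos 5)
Your high-level plan is the same as the paper's: chain the gadgets $H_{k,0},H_{k,1}$ of Figure~\ref{fig::ta1} so that every reticulation triggers Case~3.3, let the algorithm pay $\approx 2$ per gadget, and exhibit a switching of cost $\approx 1$ per gadget. However, the character you design breaks the second half of this plan. You put \emph{three distinct} states on the three leaves of each gadget (top leaves get two states, the reticulation leaf a third). Then the reticulation leaf's state agrees with \emph{neither} of its two possible attachment sides, so whichever reticulation edge a switching keeps, that pendant leaf (and the residual conflict between the other two states) forces roughly $2$ mutations per gadget in \emph{every} displayed tree, not $1$. Your lower bound ``at least one mutation locally'' gives $\mathrm{OPT}\geq k$, but you never verify the matching upper bound $\mathrm{OPT}\leq k+O(1)$, and for your character it is false: $\mathrm{OPT}$ is itself about $2k$, the same as the algorithm's score, and the ratio tends to $1$, not $2$. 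So the proposed instance does not witness tightness.

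The paper avoids this by using a \emph{binary} character: the leaf labels are the $0/1$ values $i_l$, alternating between consecutive gadgets, so the reticulation leaf's state matches exactly one parent's side. The optimal solution deletes, for each $l$, the reticulation edge coming from $u^{l}$, attaching each reticulation leaf to the side that shares its state; this costs $1$ per gadget (total $k+2$ after accounting for the top of the network). Meanwhile the alternation guarantees that the algorithm's Case-3.3 union propagation resolves each gadget at cost $2$ (total $2k+1$), giving $\tfrac{2k+1}{k+2}\to 2$. If you want to repair your argument, replace your ternary character by this two-state alternating labelling; note that Case~3.3 only needs the two sides' state sets to be incomparable, which two singletons $\{0\}$ and $\{1\}$ already provide --- the third state is unnecessary and is precisely what inflates $\mathrm{OPT}$. (Your concern about tie-breaking/processing order is legitimate and is handled in the paper by this alternating design; also note that, like the paper, one only obtains the ratio $2$ in the limit $n\to\infty$, not score exactly $2\,\mathrm{OPT}$ for a fixed $n$.)
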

\begin{proof}
First, for $k\geq 1$, we introduce graphs $H_{k,0}$ and $H_{k,1}$ with vertices labelled as in Figure~\ref{fig::ta1}. Then, for fixed integers $n=3k+2$, $k\geq 1$, we construct a phylogenetic network $N$ as follows: consider graphs $H_{1,i_1},\dots,H_{k,i_k}$ such that $i_1=0,i_2\dots,i_k\in\{0,1\}$ take alternating values from $i_1$ to $i_k$. Then, add vertex $v^0$ with label $i_0=1$ and, for $l=1,\dots,k$, add the edge $(u^l,v^{l-1})$. Next, add a vertex $u^{k+1}$ with label $i_{k-1}$. Furthermore, if $k$ is odd, then add rooted binary trees $T_1$ and $T_2$ with leafset $\{u^1,u^3,\dots,u^k\}$ and $\{u^2,u^4,\dots,u^{k+1}\}$, respectively. If $k$ is even, then add rooted binary trees $T_2$ and $T_1$ with leafset $\{u^2,u^4,\dots,u^k\}$ and $\{u^1,u^3,\dots,u^{k+1}\}$, respectively. Finally, add the root vertex $\rho$ whose children are the roots of $T_1$ and $T_2$. An example of this construction of $N$ for $k=3$ is shown in Figure~\ref{fig::ta1}. 

We define $C$ as the character taking states equal to the labels of leaves in $N$. Then, all reticulation vertices of $N$ appear in subgraphs $H_{l,i_l}$, $l\in\{1,\dots,k\}$ and in the algorithm detailed in Proposition~\ref{prop::2approx} they all fit into Case~3.3. Hence, all leaves of the induced subtrees $T_1$ and $T_2$ have character state $1$ and $0$ or $0$ and $1$, respectively. Thus, the children of $\rho$ have character states $0$ and $1$, respectively, yielding score$(T^{\text{approx}},C^{\text{approx}})=2k+1$. Analogously, we get $\text{OPT}=k+2$. To see this, note that, for $l=1,\dots,k$, removing vertices $u^l$ as parents of the reticulation vertices of $N$ gives us $T^{\text{OPT}}\in\mathcal{N}$ with score$(T^{\text{OPT}},C^{\text{OPT}})=k$ in the subnetwork $N'$ which is induced by $N$ by removing the root $\rho$ and the non-leaf vertices of $T_1$ and $T_2$. Hence, all leaves of $T_1$ and $T_2$ have character state $1$ and $0$ or $0$ and $1$, respectively, except for either one leaf in $T_1$ or $T_2$ whose character state differs from all other leaves in $T_1$ or $T_2$. respectively.

In total, we conclude that there exists $N\in\mathcal{N}$ and a character $C$ of $\Gamma$ such that
\begin{align*}
\frac{\text{score}(T,C')}{\text{OPT}}=\frac{2k+1}{k+2}=\frac{2(n-2)/3}{(n-2)/3+2}=2-\frac{12}{n+4}\stackrel{n\to\infty}{\longrightarrow}2.
\end{align*}
\end{proof}

\section{Concluding remarks}
\label{sec:5}
In this article we have investigated the softwired parsimony problem for rooted, binary, tree-child phylogenetic networks and any character to define a polynomial time 2-approximation algorithm for this problem; \CorrB{we have also shown that this approximation guarantee is tight}. This novel approximation algorithm vastly improves the linear approximation quality guarantee of the simple approximation for the SPS which is known in the literature as the best approximation algorithm for the SPS in general.
Plausibly our 2-approximation algorithm could be used as part of a combinatorial branch and bound approach to solving the SPS exactly on tree-child networks.

To the best of our knowledge, this is with the very recent exception of \cite{deen2023near} the only known example of an explicit primal-dual algorithm in phylogenetics. This is interesting, since the polyhedral angle on phylogenetics problems dates back to at least 1992 \cite{erdos1992evolutionary} and integer linear programming has been used quite frequently in the field (see e.g. \cite{sridhar2008mixed}). We suspect that unpacking many of these integer linear programming formulations to study properties of the linear programming relaxations could be a very fruitful line of research. The recent duality-based 2-approximation algorithm for \emph{agreement forests} \cite{olver2023duality} is a good example of this potential.

\CorrB{Three explicit questions to conclude. First, in how far can the result in this paper be generalized beyond the subclass of rooted, binary, tree-child networks? In the proof of Proposition~\ref{prop::2approx} we have seen that the particular local structure of tree-child networks allows us to lift the restriction of our approximation algorithm to time-consistent networks. The same argument does not extend to superclasses of tree-child networks. However, since we derive our approximation algorithm from an algorithmic construction on trees (Proposition~\ref{prop::fitch2}) subclasses of tree-based networks are suitable candidates to generalize our results~\cite{kong1}. In particular, studying orchard networks in this context might be fruitful as the \emph{horizontal gene transfer labelling} characterizing them~\cite{leo22} does remove possible complications in the analysis when dealing with time-consistency beyond tree-child networks.} Second, in how far can it be generalized to other notions of network parsimony \cite{van2018improved}, or even to non-parsimony models for scoring networks? \CorrB{Third, can we find a polynomial time approximation algorithm for SPS on rooted, binary, tree-child networks that achieves an approximation factor better than 2? At present we do not know whether the problem is APX-hard, so it could still even be the case that
the problem permits a polynomial-time approximation scheme (PTAS).}

\section*{Acknowledgments}
\noindent \small{The first author acknowledges support from the European Union’s Horizon 2020 research and innovation programme under the Marie Skłodowska-Curie grant agreement no. 101034253, and by the NWO Gravitation project NETWORKS under grant no. 024.002.003.}

\section*{Compliance with Ethical Standards}
\noindent \small{Funding: This study was funded by the European Union’s Horizon 2020 research and innovation programme under the Marie Skłodowska-Curie grant agreement (101034253) and the NWO Gravitation project NETWORKS (024.002.003)}

\noindent \small{Ethical approval: This article does not contain any studies with human participants or animals performed by any of the authors.}

\bibliographystyle{cas-model2-names}

\end{document}